\newtheorem{Obs}{Observation}
\newtheorem{Prop}{Proposition}
\title{An effective decomposition approach and heuristics to generate spanning trees with a small number of branch vertices}
\author{Rafael A. Melo{\thanks{Departamento de Ci\^{e}ncia da Computa\c{c}\~{a}o, Instituto de Matem\'{a}tica, Universidade Federal da Bahia. Av. Adhemar de Barros, s/n, Salvador, BA 40170-110, Brazil.  ({\tt melo@dcc.ufba.br}). Work of this author was supported by the Brazilian National
Council for Scientific
and Technological Development (CNPq).}} \and Phillippe Samer \thanks{Departamento de Ci\^{e}ncia da Computa\c{c}\~{a}o, Universidade Federal de Minas Gerais, Av. Ant\^onio Carlos 6627 - Belo Horizonte, MG 31270-010, Brazil. ({\tt samer@dcc.ufmg.br})} \and Sebasti\'{a}n A. Urrutia \thanks{Departamento de Ci\^{e}ncia da Computa\c{c}\~{a}o, Universidade Federal de Minas Gerais, Av. Ant\^onio Carlos 6627 - Belo Horizonte, MG 31270-010, Brazil. ({\tt surrutia@dcc.ufmg.br})}}
\begin{document}

\maketitle

\begin{abstract}
\noindent Given a graph $G=(V,E)$, the \textit{minimum branch vertices problem} consists in finding a spanning tree $T=(V,E')$ of $G$ minimizing the number of vertices with degree greater than two.
We consider a simple combinatorial lower bound for the problem, from which we 
propose a decomposition approach. The motivation is to break down the problem into several smaller subproblems which are more tractable computationally, and then recombine the obtained solutions to generate a solution to the original problem.
We also propose effective constructive heuristics to the problem which take into consideration the problem's structure in order to obtain good feasible solutions.
Computational results show that our decomposition approach is very fast and can drastically reduce the size of the subproblems to be solved. This allows a branch and cut algorithm to perform much better than when used over the full original problem. 
The results also show that the proposed constructive heuristics are highly efficient and generate very good quality solutions, outperforming other heuristics available in the literature in several situations.
\\

{\bf Keywords:} minimum branch vertices, spanning tree, graph decomposition, heuristics, branch and cut, combinatorial optimization.

\end{abstract}

\section{Introduction}

Spanning tree problems with constraints and/or objective functions concerning 
the degrees of the vertices in the tree arise very 
often in the context of communication networks. In this paper we deal with the 
problem of finding spanning trees which minimize the number of branch vertices 
(vertices with degree greater than two), known in the literature as the 
\textit{minimum branch vertices problem} (MBV).

The MBV was introduced in Gargano \emph{et al.}~\cite{Garetal04,Garetal02}, motivated by the context of optical networks using a technology which allows a specific type of switch to replicate a signal by splitting light. 
Such a switch is required if a connection arrives at a given vertex of the network and has to be multicasted to different vertices. Therefore, a branch vertex in a network tree would imply the use of such a sophisticated switch.
However, the availability of these switches is usually limited due to their costs, and one is asked to minimize their need in a network topology. 

Several authors studied different theoretical properties concerning the number of branch vertices in a graph. 
Gargano \emph{et al.}~\cite{Garetal04,Garetal02} showed the problem to be NP-Hard, 
besides some additional complexity results. 
They also considered bounds on the number of branch vertices and certain algorithmic 
and combinatorial aspects regarding the existence of structures such as spanning spiders (trees with at most one branch vertex).
Matsuda \emph{et al.}~\cite{Matetal14} give bounds on the number of branch vertices in claw-free graphs.

Chimani and Spoerhase~\cite{ChiSpo13} developed a 6/11-approximation algorithm for the complement of the MBV, namely the maximum path-node spanning tree problem, which aims at finding a spanning tree maximizing the number of vertices with degree at most two.

Different integer programming and heuristic approaches were proposed to the MBV. 
Carrabs \emph{et al.}~\cite{Caretal13} studied integer programming formulations and a Lagrangian relaxation approach able to generate good feasible solutions.
    Cerrone \emph{et al.}~\cite{Ceretal14} showed the relation between three problems: the minimum branch vertices, the minimum degree sum problem and the minimum leaves problem. They also presented an evolutionary algorithm which could be applied to these problems.
Cerulli \emph{et al.}~\cite{Ceretal09} proposed three heuristics to the problem, exploiting vertex weighting, vertex coloring, and an approach combining the previous two.
Silva \emph{et al.}~\cite{Siletal14}  proposed an iterative refinement local search procedure. Almeida \emph{et al.}~\cite{Almetal14} used a constructive heuristic based on breadth-first search, together with some changes in the procedure described in~\cite{Siletal14}.
\"Oncan~\cite{Onc14} proposed inequalities to improve the linear relaxation bounds obtained with a polynomial size integer programming formulation to the problem. Rossi \emph{et al.}~\cite{Rosetal14} implemented a hybrid approach, combining a branch and cut algorithm with a tabu search heuristic, which could be applied to the MBV and to the minimum degree sum problem. Computational experiments showed that their approach outperformed the other integer programming techniques available in the literature.
Very recently, Mar\'in~\cite{Mar15} proposed a preprocessing technique, valid inequalities and a heuristic algorithm, which combined could solve to optimality several instances used in the literature.

Our work aims at the development of an effective approach to solve the MBV.
The remainder of the paper is organized as follows. 
Section \ref{sec:problem} gives a formal statement of the problem together with an integer programming formulation containing an exponential number of inequalities. 
In Section \ref{sec:lowerbound} we give a simple algorithm to derive a combinatorial lower bound, which also determines important information that will be used in our graph decomposition approach.
In Section \ref{sec:graphdecomposition} we present the decomposition method, which aims at dividing the problem into several smaller subproblems 
that are hopefully more manageable computationally.
Two constructive heuristics, which try to take advantage of the problem's structure and of what is expected of a good solution, are described in Section \ref{sec:heuristics}. In Section \ref{sec:computation}, we present computational results considering the proposed approaches together with a branch and cut algorithm using the formulation described in Section \ref{sec:problem}.
Some final remarks are discussed in Section \ref{sec:finalremarks}.
\section{Problem definition and formulation}
\label{sec:problem}

Let $G=(V,E)$ be a simple, connected, undirected graph with a set $V$ of vertices and a set $E$ of edges.
Denote $d_G(v)$ the degree of $v \in V$ in $G$. 
A vertex $v \in V$ is a branch vertex if it has degree greater than two, i.e. $d_G(v)>2$.
Given the graph $G$, the \textit{minimum branch vertices problem} (MBV) consists in finding a spanning tree $T=(V,E')$, with $E' \subseteq E$, minimizing the number of branch vertices.

We now present an integer programming formulation for the problem containing an exponential number of constraints. Consider the variables:
\[ x_{uv} = \left\{
   \begin{array}{l}
      1, \textrm{ if edge $(u,v) \in E$ is part of the tree,}\\      
      0, \textrm{ otherwise;}
   \end{array}
   \right. 
\]
\[ y_{v} = \left\{
   \begin{array}{l}
      1, \textrm{ if vertex $v \in V$ is a branch vertex,}\\      
      0, \textrm{ otherwise.}
   \end{array}
   \right. 
\]
Using the variables just described, the MBV can be formulated as:
\begin{alignat}{2}
z = \min&\quad  \sum_{v \in V} y_v \label{obj}\\
\text{s.t.}&\quad \nonumber \\
&  \sum_{(u,v)\in E} x_{uv} = |V|-1, \quad \label{const-01} \\
&  \sum_{(u,v)\in E: u,v \in S} x_{uv} \leq |S|-1  \ \ \  \quad && {\rm for \ } S \subset V,  \label{const-02}\\
&  \sum_{u:(u,v)\in E} x_{uv} \leq 2 + (d_G(v) -2)y_v \ \ \ \quad && {\rm for \ } v \in V,  \label{const-03}\\
& x \in \{0,1 \}^{|E|}, \quad \ y \in \{0,1\}^{|V|}. \label{const-04}
\end{alignat}
The objective function (\ref{obj}) minimizes the number of branch vertices.
Constraints (\ref{const-01}) specify that exactly $|V|-1$ edges are selected.
Constraints (\ref{const-02}) are subtour elimination constraints; note that we only need to consider subsets $S$ such that $|S| > 2$.
Constraints (\ref{const-03}) set the branch vertex variable associated to vertex $v$ to one in case more than two edges are incident to $v$.
Constraints (\ref{const-04}) are integrality requirements on the variables.
It is worth mentioning that we can fix the $y_v$ variable at zero for any vertex $v \in V$ with degree lower or equal to two.
\section{A combinatorial lower bound}
\label{sec:lowerbound}

In this section, we present an algorithm to calculate a combinatorial lower bound to the MBV problem, which also determines some important data to be used in a graph decomposition approach.
Let $s(G)$ be the minimum number of branch vertices in any spanning tree of $G$.
An articulation point (or cut vertex) is a vertex whose removal from the graph increases the number of components in the graph.
A straightforward lower bound $\underline{s}(G)$ on the value of $s(G)$ can be calculated based on a simple observation of Gargano et al. \cite{Garetal02}: any articulation point whose removal induces at least three connected components is necessarily a branch vertex in any spanning tree of $G$. We denote such vertices \textit{obligatory branches} and define $L_o$ to be the set of obligatory branches of $G$.

Let $G-v$ be the subgraph of $G$ obtained by removing the vertex $v$ together with all its incident edges. We denote by $\alpha(v)$ the number of connected components of $G-v$. Note that $\alpha(v) \geq 2$ for an articulation point $v$, while $\alpha(v) \geq 3$ for any obligatory branch.
Algorithm~\ref{alg:lowerbound} calculates the lower bound $\underline{s}(G)$ on the number of branch vertices, the set of obligatory branches $L_o$ and the value of $\alpha(v)$ for every obligatory branch $v \in L_o$.

\begin{algorithm}[!ht]
\caption {Obligatory branches lower bound}
\label{alg:lowerbound}
\begin{algorithmic}[1]
	\STATE $L_o \leftarrow \emptyset$
	\STATE $V' \leftarrow$ list of articulation points in $G$
	\FORALL {$v \in V'$}
		\STATE $c \leftarrow$ number of connected components of $G-v$
		\IF {$c\geq 3$}
			\STATE $L_o \leftarrow L_o + \{ v\}$
			\STATE $\alpha(v) \leftarrow c$
		\ENDIF
	\ENDFOR
	\STATE $\underline{s}(G) \leftarrow |L_o|$
\end{algorithmic}
\end{algorithm}

\begin{Obs}
Algorithm \ref{alg:lowerbound} runs in $O(|V|+|E|)$.
\end{Obs}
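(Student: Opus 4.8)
The plan is to show that every line of Algorithm~\ref{alg:lowerbound} can be executed within a single depth-first traversal of $G$, so that the total work is linear in the size of the graph. The starting point is the classical Hopcroft--Tarjan algorithm, which computes the list $V'$ of all articulation points of a connected graph in $O(|V|+|E|)$ time: a single DFS assigns to each vertex a discovery index $\mathrm{disc}(v)$ and a low-link value $\mathrm{low}(v)$ (the smallest discovery index reachable from the subtree rooted at $v$ using at most one back edge), and $v$ is an articulation point precisely when either $v$ is the DFS root with at least two children, or $v$ is a non-root vertex having a child $c$ in the DFS tree with $\mathrm{low}(c) \geq \mathrm{disc}(v)$. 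This handles line~2.

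The only apparent obstacle is line~4, where the pseudocode asks for the number of connected components of $G-v$. Computing this from scratch for each $v \in V'$ would cost $O(|V|+|E|)$ per vertex and hence $O(|V'|\,(|V|+|E|))$ overall, destroying the linear bound. The key observation I would establish is that the quantity $\alpha(v)$ is already determined by the same low-link information: removing $v$ isolates exactly those subtrees hanging from $v$ that cannot reach an ancestor of $v$. Concretely, for a non-root vertex $v$ the number of components of $G-v$ equals one (the component containing the parent and ancestors of $v$) plus the number of DFS children $c$ of $v$ with $\mathrm{low}(c) \geq \mathrm{disc}(v)$; for the DFS root it equals the number of its DFS children. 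I would prove this using the fact that an undirected DFS has no cross edges, so two distinct child subtrees of $v$ are joined in $G-v$ only through the ancestor part, and a child subtree reconnects to that part exactly when its root $c$ admits a back edge to a proper ancestor of $v$, i.e. exactly when $\mathrm{low}(c) < \mathrm{disc}(v)$.

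Granting this, I would augment the DFS so that, while scanning the children of each vertex $v$, it maintains a counter incremented whenever a child $c$ with $\mathrm{low}(c) \geq \mathrm{disc}(v)$ is found; at the moment the recursion returns to $v$, this counter (increased by one for non-root vertices) yields $\alpha(v)$ and simultaneously decides whether $v \in L_o$. Each such test is performed once per tree edge and costs $O(1)$, so the augmentation adds only $O(|V|+|E|)$ work in total. Finally, lines~1, 6, 7 and 10 amount to set insertions, assignments and reading off $|L_o|$, all subsumed in the same $O(|V|)$ bookkeeping. Summing the linear cost of the DFS, the constant-per-edge augmentation, and the linear final accounting gives the claimed $O(|V|+|E|)$ running time.
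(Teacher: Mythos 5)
Your proof is correct and takes essentially the same route as the paper's: a single DFS with discovery times and low values identifies the articulation points and simultaneously yields $\alpha(v)$ (the paper phrases this as ``finding $i$ articulations at a vertex means $i+1$ components are left after removing it''), so the subsequent loop costs $O(1)$ per vertex and the total is $O(|V|+|E|)$. Your write-up is in fact slightly more careful than the paper's sketch, since you state the low-link criterion explicitly and handle the DFS root (where the component count is the number of children, not children plus one) as a separate case.
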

\begin{proof}
The list of articulation points $V'$ can be obtained in step 2 using a depth-first search (DFS) algorithm augmented with the opening time of each vertex, \emph{i.e.} its visiting index in the DFS tree. 
In particular, we identify an \emph{articulation} at a given vertex $u$ if, when the search at a neighbor $v$ returns, the oldest index (opened the earliest) reachable from $v$ is no smaller than the opening time of $u$. Finding $i$ articulations at a vertex means that the search started from it $i$ times, and that $i+1$ components are left after removing it, which is precisely the information used later in step 4. 

The DFS algorithm runs in $O(|V|+|E|)$ when an adjacency list is used.
The \textbf{for} loop (lines 3-9) is executed $|V'|$ times, and all operations within it are performed in constant order of complexity. Therefore the algorithm runs in $O(|V|+|E|) + O(|V'|)$ $=$ $ O(|V|+|E|)$.
\end{proof}

Given that the MBV problem is NP-Hard, one should note that the lower bound obtained using Algorithm \ref{alg:lowerbound} is not necessarily tight.
It is actually possible to find situations in which this occurs even for very small graphs.
For instance, consider the graph illustrated in Figure \ref{fig:exampleboundnottight}, which does not have any obligatory branch vertex. Nevertheless, it is not difficult to check that either $b$ or $d$ should be a branch vertex in a spanning tree of the considered graph.
\begin{figure}[!ht]
\centering
	\epsfig{file=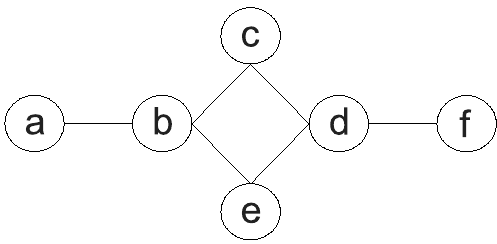,scale = 1.0}
\caption{Example in which $\underline{s}(G)=0<s(G)=1$}
\label{fig:exampleboundnottight}
\end{figure}

\section{Graph decomposition approach}
\label{sec:graphdecomposition}

In this section, we present a graph decomposition approach to the minimum branch vertices problem based on two different properties of a graph. 
The first one is grounded on the existence of obligatory branch vertices.
The second builds on the existence of cut edges (or bridges): an edge whose removal increases the number of connected components of a graph.
This decomposition approach aims at dividing the original MBV problem into smaller subproblems, that can be solved separately and then recombined in order to generate a solution to the original problem. We remark that Mar\'in~\cite{Mar15} proposed a preprocessing routine, which included the identification of cut edges, but that information was not used in a decomposition method by the author.

\subsection{Obligatory branches based decomposition}

This first decomposition is based on the fact that an obligatory branch $v\in V(G)$ is going to be a branch vertex in any spanning tree of a graph $G$.
Starting from $G=(V,E)$, we build a new graph $G_o=(V_o,E_o)$ as follows.
Initially, $V_o=V$ and $E_o=E$, such that $G_o=(V,E)$. Next, for each obligatory branch $v \in L_o$, create $\alpha(v)$ new vertices $\{v_1,\ldots,v_{\alpha(v)}\}$ and add them to $V_o$. Let $N_i(v)$ be the set of vertices to which $v$ is adjacent in the $i-th$ connected component of the subgraph $G-v$. For each $ i \in \{1 ,\ldots,\alpha(v) \}$, create in $E_o$ edges from $v_i$ to every vertex in $N_i(v)$. After that, remove $v$ from $V_o$ and all its adjacent edges from $E_o$.

The obligatory branches based decomposition is illustrated in Figure \ref{fig:decompositionoblbr} for a graph with two obligatory branches.

\begin{figure}[!ht]
\centering
\subfigure[Original graph with obligatory branches $c$ and $g$]{
   \includegraphics[scale =0.7] {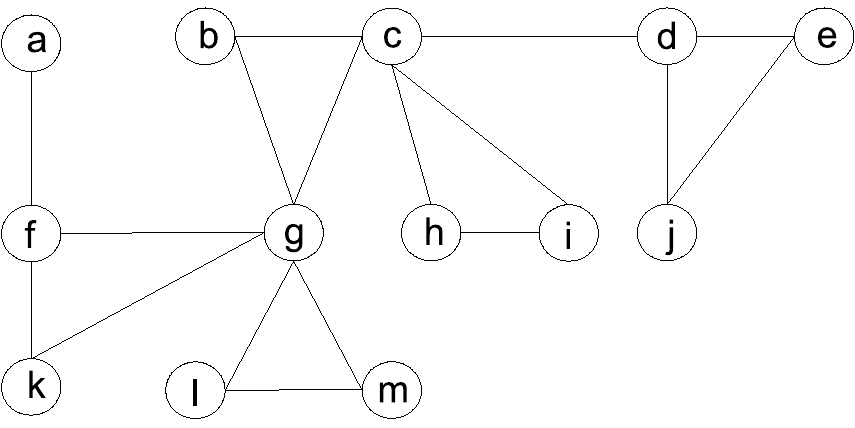}
   \label{fig:decompositionoblbr_1} 
 }
 \hspace{0.7cm}
 \subfigure[Applying decomposition for obligatory branches $c$ and $g$]{
   \includegraphics[scale =0.7] {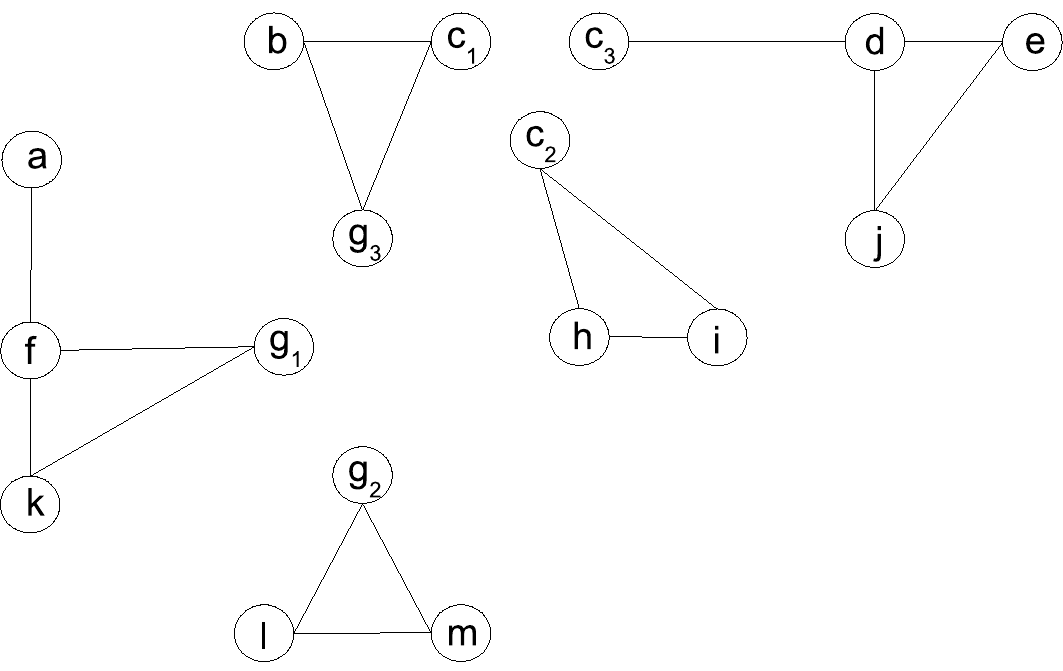}
   \label{fig:decompositionoblbr_2}
 }
\caption{Illustration of the obligatory branches based decomposition}
\label{fig:decompositionoblbr}
\end{figure}

\subsection{Cut edges based decomposition}

This decomposition is based on the fact that every cut edge must be in any spanning tree of $G$.
Starting from a graph $G=(V,E)$, let $B  = \{(u,v) \in E: (u,v) \textrm{ is a cut edge in G}\}$.

A new graph $G'$ can be obtained from $G$ by removing from $E$ every edge $(u,v) \in B$. 
For the sake of correctness of the formulation for each resulting subproblem defined over $G'$ (see Section \ref{sec:decomposed-problem}), we need the information of \emph{extra degree} corresponding to both end vertices of the cut edges removed in $G'$. Let $\gamma(v)$ be the \textit{extra degree} of a vertex $v$, which corresponds to the number of cut edges $(u,v) \in B$ incident to $v$ in the initial graph $G$.

The cut edges based decomposition is illustrated in Figure \ref{fig:decompositioncutedges} for a graph with two cut edges.

 \begin{figure}[!ht]
\centering
\subfigure[Original graph with cut edges $(a,f)$ and $(c,d)$]{
   \includegraphics[scale =0.7] {DecompositionsExample.pdf}
   \label{fig:decompositioncutedges_1} 
 }
 \hspace{0.7cm}
 \subfigure[Applying decomposition for cut edges $(a,f)$ and $(c,d)$, which implies $\gamma(a)=\gamma(f)=\gamma(c)=\gamma(d)=1$]{
   \includegraphics[scale =0.7] {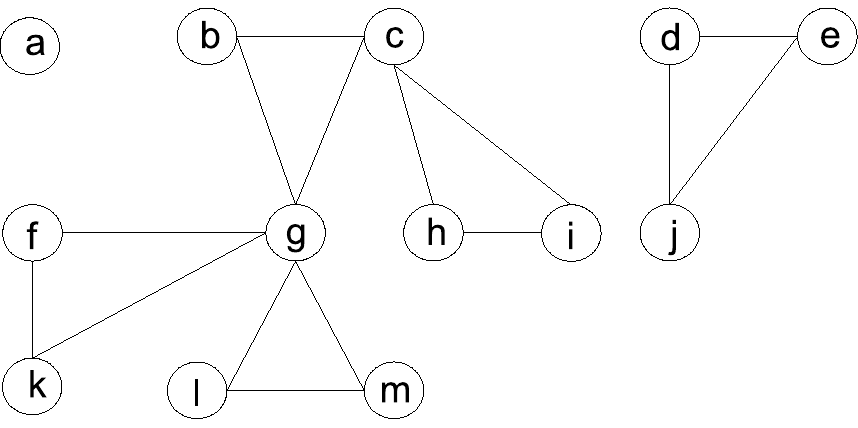}
   \label{fig:decompositioncutedges_2}
 }
\caption{Illustration of the cut edges based decomposition}
\label{fig:decompositioncutedges}
\end{figure}

\subsection{The decomposed problem}
\label{sec:decomposed-problem}

We now give an integer programming formulation aimed at solving each component of the decomposed problem as an independent subproblem.
Let $G'_o=(V'_o,E'_o)$ be the resulting graph after performing the obligatory branches and cut edges based decompositions. Observe that $G'_o$ is disconnected in case at least one obligatory branch or one cut edge was found.

Let $K$ be the number of connected components in $G'_o$, 
and ${G'}_o^k=({V'}_o^{k},{E'}_o^k)$  be the $k-th$ connected subgraph of $G'_o$, with $k \in \{1, \ldots, K \}$. 
Define the set ${\overline{V}'}^{k}_o = \{v \in {V'}^k_o: v \textrm{ was not artificially created based on } L_o\}$.
A formulation for each of the $k$ connected components can be obtained as:

\begin{alignat}{2}
z_k = \min&\quad  \sum_{v \in \overline{V}'^k_o} y_v \label{dec-obj}\\
\text{s.t.}&\quad \nonumber \\
&  \sum_{(u,v)\in {E'}_o^k} x_{uv} = |{V'}^k_o|-1, \quad \label{dec-01} \\
&  \sum_{(u,v)\in {E'}_o^k:u,v \in S} x_{uv} \leq |S|-1  \ \ \ \quad &&  {\rm for \ } S \subset {V'}^k_o,  \label{dec-02}\\
&  \sum_{u:(u,v)\in {E'}_o^k} x_{uv} + \gamma(v) \leq 2 + (d_G(v) -2)y_v \ \ \ \quad &&  {\rm for \ } v \in {\overline{V}'}^k_o,  \label{dec-03}\\
& x \in \{0,1 \}^{|{E'}_o^k|}, \quad \ y \in \{0,1\}^{|{V'}^k_o|}. \label{dec-04}
\end{alignat}

\vspace{0.5cm}

\begin{Prop}
An optimal solution to the minimum branch vertices problem can be obtained from the solutions of the $K$ connected components of $G'_o$ and its optimal value is $$z=|L_o| + \sum_{k=1}^K z_k.$$
\end{Prop}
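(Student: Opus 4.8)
The plan is to establish a value-preserving correspondence between the spanning trees of $G$ and the tuples $(T_1,\ldots,T_K)$ of spanning trees of the components ${G'}_o^1,\ldots,{G'}_o^K$, and then read off the claimed identity by optimizing on both sides. Throughout I would rely on two structural facts already available: every cut edge belongs to every spanning tree of $G$, and every obligatory branch $v$ is a branch vertex in every spanning tree (indeed, since $G-v$ has $\alpha(v)\ge 3$ components, any spanning tree must use at least one edge from $v$ into each, so $d_G(v)\ge\alpha(v)\ge 3$ already forces $v$ to be branching). The latter is exactly what justifies the additive term $|L_o|$.

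First I would define the forward map. Given a spanning tree $T$ of $G$, delete the cut edges (all of which lie in $T$) and, at each obligatory branch $v$, reassign every tree edge incident to $v$ to the copy $v_i$ associated with the component of $G-v$ containing its other endpoint. The key claim is that this produces, on each ${G'}_o^k$, a spanning tree $T_k$. This rests on the cut-vertex property: for a single obligatory branch $v$, the unique $T$-path from any $w$ in component $C_i$ to $v$ cannot leave $C_i\cup\{v\}$ (the only exit from $C_i$ is $v$ itself, where the path would already terminate), so $T$ restricted to $C_i\cup\{v\}$ is a spanning tree of that part, which maps to a spanning tree of the corresponding component after relabeling $v$ as $v_i$; cut edges are handled analogously. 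I would then check the degree bookkeeping: for a non-artificial vertex $v$ one has $d_T(v)=d_{T_k}(v)+\gamma(v)$, because the edges at $v$ that leave its component are precisely the cut edges counted by $\gamma(v)$, all of which lie in $T$. Consequently constraint (\ref{dec-03}) is satisfied and, in an optimal subproblem solution, forces $y_v=1$ exactly when $d_T(v)>2$. Hence each $T_k$ is feasible for subproblem $k$, and the branch vertices of $T$ split into the $|L_o|$ obligatory ones (uncounted in any $z_k$, since the copies are excluded from $\overline{V}'^k_o$) plus the non-obligatory ones, which are exactly the vertices set to one across the $T_k$. Applying the forward map to an optimal $T$ yields $z\ge|L_o|+\sum_k z_k$.

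For the reverse inequality I would build the recombination map: take the disjoint union $\bigcup_k T_k$, add back every cut edge, and identify the copies $v_1,\ldots,v_{\alpha(v)}$ of each obligatory branch into the single vertex $v$. The crucial observation making this acyclic is that any two copies $v_i,v_j$ of the same obligatory branch lie in \emph{distinct} components of $G'_o$: a path joining them in $G'_o$ would give a path in $G$ between $C_i$ and $C_j$ avoiding $v$, contradicting that these are different components of $G-v$. The same reasoning applied to bridge endpoints shows each reinserted cut edge joins two distinct components. Therefore neither the identifications nor the cut-edge insertions create a cycle, while connectivity is inherited from that of $G$ (any $G$-path between two vertices can be tracked through the pieces, across reinserted bridges, and through identified copies). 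A short count then confirms $T$ is a spanning tree of $G$: applying the forward map to any spanning tree of $G$ shows it decomposes into a spanning forest of $G'_o$, whence $K=1+|B|+\sum_{v\in L_o}(\alpha(v)-1)$; with this identity the $\sum_k(|{V'}^k_o|-1)$ tree edges together with the $|B|$ reinserted cut edges total $n-1$, on the $n$ vertices recovered after the identifications. Using the same degree bookkeeping, recombining optimal subproblem solutions produces a spanning tree with $|L_o|+\sum_k z_k$ branch vertices, giving $z\le|L_o|+\sum_k z_k$.

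Combining the two inequalities proves $z=|L_o|+\sum_k z_k$, and the backward construction exhibits an explicit optimal spanning tree of $G$. I expect the main obstacle to be the structural verification that the two maps are well defined, in particular proving acyclicity and connectivity of the recombined tree in the presence of several interacting obligatory branches and cut edges. The clean way through is to isolate the two separation facts above (copies of a branch, and endpoints of a bridge, always fall in different components of $G'_o$), from which both the no-cycle argument and the edge count follow; the identity $d_T(v)=d_{T_k}(v)+\gamma(v)$ is the other delicate point, since it is what ties the subproblem objective, through constraint (\ref{dec-03}), to the true number of branch vertices of $T$.
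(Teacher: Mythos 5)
Your proposal is correct and takes essentially the same route as the paper's own argument: split the graph at obligatory branches and cut edges, recombine the component solutions, and tie the objective values together via the degree identity $d_T(v)=d_{T_k}(v)+\gamma(v)$ plus the additive term $|L_o|$. In fact, your write-up is substantially more rigorous than the paper's proof, which only sketches the recombination direction informally and leaves implicit both the forward decomposition map (needed to establish $z \ge |L_o| + \sum_{k=1}^{K} z_k$) and the acyclicity, connectivity, and edge-count verifications that you spell out.
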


\begin{proof}
By definition, each obligatory branch should contribute with one unit to the objective function $z$. Since they are not considered in the decomposed objective functions $z_k$, the $|L_o|$ units have to be added to $z$.

Since every cut edge $(u,v) \in B$ must be in the spanning tree, the values $\gamma(u)$ and $\gamma(v)$ of its incident vertices will guarantee that the additional degree corresponding to these edges will be taken into account in ${G'}^k _o$.
Considering an obligatory branch $v$, there are no edges linking vertices between two different components in the original graph without $v$, i.e. $G-v$. Therefore the different components of $G-v$ directly implied by the removal of $v$ have to be connected through the edges incident to $v$ in $G$.

In every connected component ${G'}^k_o$, we obtain a spanning tree minimizing the number of branch vertices, disregarding the obligatory branches implied by $L_o$.
The different trees will be connected to the other components in the original graph via either the obligatory branch vertices or the cut edges.
The resulting forest implies thus a spanning tree with minimum number of branch vertices in the original graph $G$.
\end{proof}

\section{Heuristic algorithms}
\label{sec:heuristics}

We noted that the heuristics available in the literature for the MBV do not take advantage of the problem's structure in order to generate good feasible solutions. Instead, they concentrate into the improvement of available solutions.
We present two simple constructive heuristic algorithms to the MBV 
that take the structure of the problem into consideration in an attempt to 
obtain high quality solutions.

The key observation is the fact that in an ideal situation (one in which no branch vertices are necessary), the optimal spanning tree to the problem should be a branch vertex free solution, i.e. a Hamiltonian path. 
Observe also that good solutions will tend to have paths that are as long as possible connected to each other via some branch vertices. The two constructive heuristics presented in this section take this into consideration.

\subsection{Path expanding heuristic}
The basic idea of the path expanding heuristic is to, starting from a tree $T$ with a single vertex, constructively turn $T$ into a spanning tree by expanding paths already in $T$.
The algorithm has two main components, namely a \textit{start-restart} in which a new vertex is selected as source of a new path and a \textit{path expansion} in which new vertices are added to the path being expanded. The greedy criteria for the two components are:
\begin{itemize}
    \item start-restart (enumerated in order of priority):
    \begin{enumerate}
        \item obligatory branch vertex;
        \item vertex whose degree is already greater than two in the tree;
        \item vertex with the largest number of neighbors which are still not in the tree;
    \end{enumerate}
    \item path expansion: vertex, adjacent to the latest one added to the tree, with the smallest number of neighbors which are still not in the tree.
\end{itemize}

The heuristic is presented in Algorithm \ref{alg:pathexpanding} and it works as follows. Start the tree with a vertex selected according to the start-restart greedy criterion. Build a path starting at a vertex already belonging to the partial tree, selecting the next vertices in this path according to the path expansion criterion until it can no longer be expanded. If there are still vertices not belonging to the tree, another vertex with unvisited neighbors in the partial tree is selected and a new path is constructed starting from the current tree.
The algorithm is illustrated in Figure~\ref{fig:pathexpanding}.

\begin{algorithm}[!ht]
\caption {Path-Expanding($G=(V,E)$)}
\label{alg:pathexpanding}
\begin{algorithmic}[1]
	\STATE {$ V(T) \leftarrow \{u\}$, with $u$ selected according to the start-restart criterion}
	\WHILE {$|T| < |V|-1$}
	    \STATE {select a vertex $u \in V(T)$ with neighbors still not in $V(T)$ and degree at most 1, if any exists; otherwise, select the best vertex according to the start-restart criterion}
	    \WHILE{$u$ has neighbors not belonging to $V(T)$}
	        \STATE {$v \leftarrow $ best neighbor of $u$ according to the path expansion criterion such that $v\notin V(T)$ }
	        \STATE {$V(T) \leftarrow V(T) + \{v\}$}
            \STATE {$T \leftarrow T + \{(u,v)\}$}
	        \STATE {$u \leftarrow v$}
	    \ENDWHILE
	\ENDWHILE
	\RETURN $T$
\end{algorithmic}
\end{algorithm}

 \begin{figure}[!hb]
\centering
\subfigure[]{
   \includegraphics[scale =0.55] {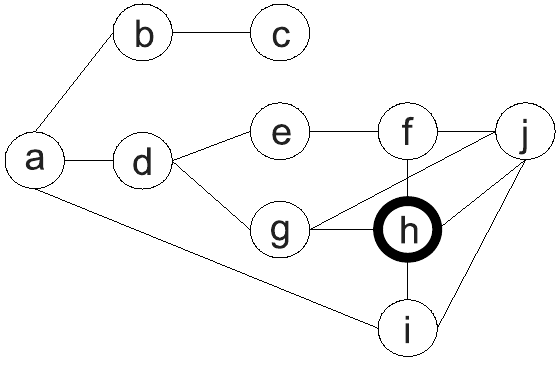}
   \label{fig:pathexpanding_0} 
 }
\subfigure[]{
   \includegraphics[scale =0.55] {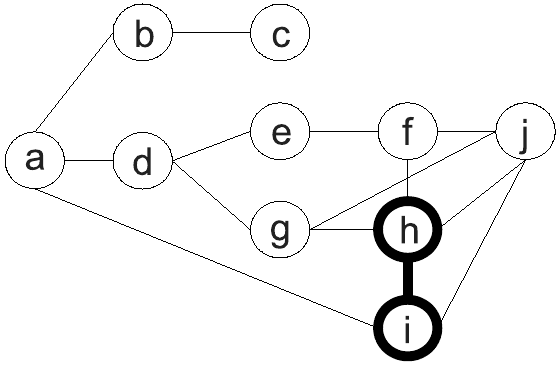}
   \label{fig:pathexpanding_1} 
 }
 \subfigure[]{
   \includegraphics[scale =0.55] {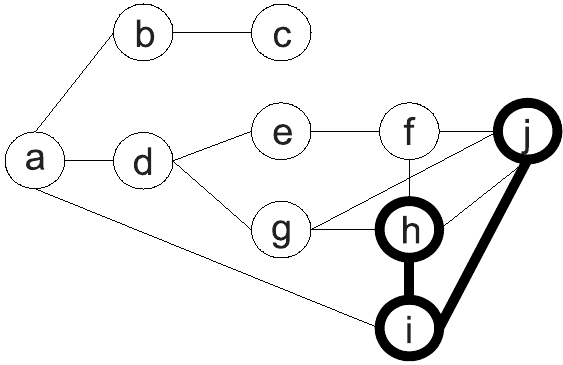}
   \label{fig:pathexpanding_2}
 }
\subfigure[]{
   \includegraphics[scale =0.55] {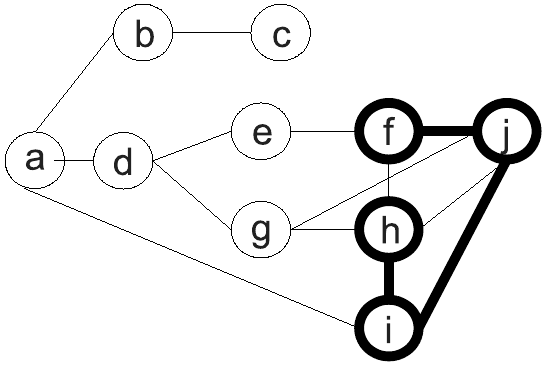}
   \label{fig:pathexpanding_3} 
 }
 \subfigure[]{
   \includegraphics[scale =0.55] {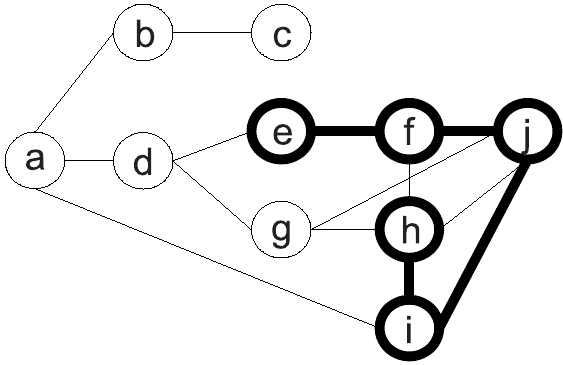}
   \label{fig:pathexpanding_4}
 }\subfigure[]{
   \includegraphics[scale =0.55] {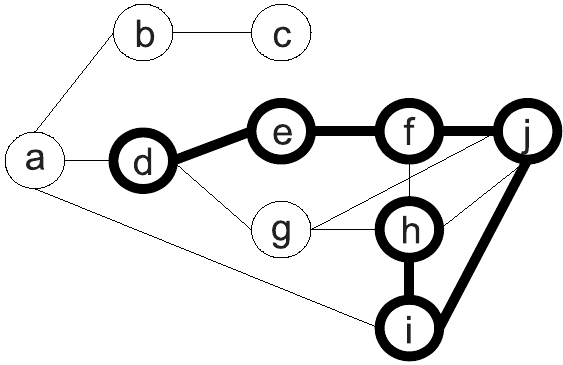}
   \label{fig:pathexpanding_5} 
 }
 \subfigure[]{
   \includegraphics[scale =0.55] {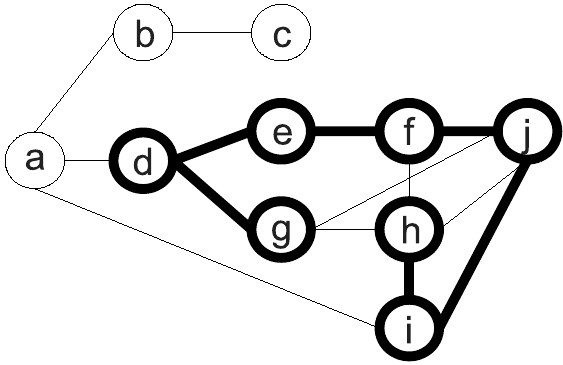}
   \label{fig:pathexpanding_6}
 }\subfigure[]{
   \includegraphics[scale =0.55] {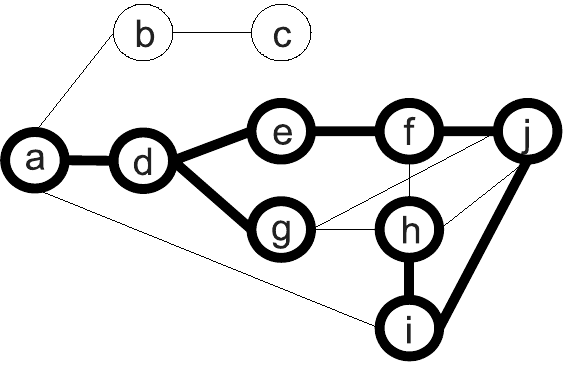}
   \label{fig:pathexpanding_7} 
 }
 \subfigure[]{
   \includegraphics[scale =0.55] {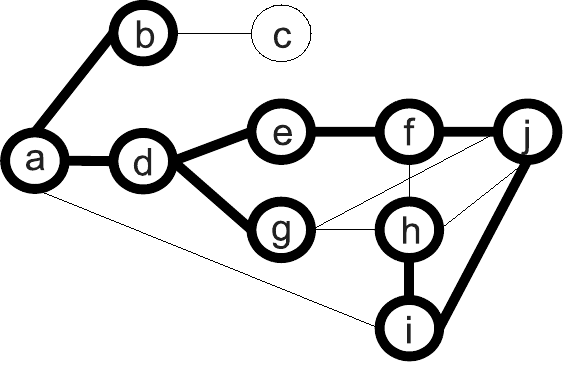}
   \label{fig:pathexpanding_8}
 }\subfigure[]{
   \includegraphics[scale =0.55] {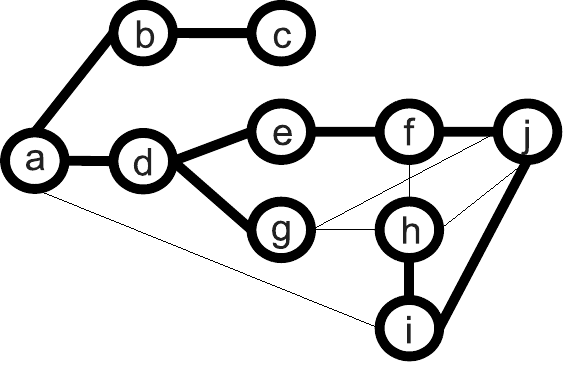}
   \label{fig:pathexpanding_9} 
 }
\caption{Illustration of the path expanding heuristic. The vertices and edges in bold are those belonging to the tree being constructed.}
\label{fig:pathexpanding}
\end{figure}

\begin{Obs}
Algorithm \ref{alg:pathexpanding} runs in $O(|V|^2)$.
\end{Obs}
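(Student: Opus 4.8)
The plan is to charge the total running time to two sources --- the work spent selecting start vertices (line 3) and the work spent expanding paths (the inner loop, lines 4--8) --- and to bound each by $O(|V|^2)$.

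First I would establish the two governing counts. Since a vertex is never removed from $V(T)$ once inserted, the body of the inner loop (which inserts exactly one new vertex per pass) executes at most $|V|$ times over the whole run. For the outer loop, I would argue that each of its iterations inserts at least one vertex into $V(T)$: as $G$ is connected, whenever $V(T) \neq V$ there is a tree vertex adjacent to an unvisited one, so the vertex $u$ chosen in line 3 has a neighbor outside $V(T)$ and the inner loop performs at least one insertion. Consequently the outer loop also runs at most $|V|$ times.

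Next I would fix the data structures so that each greedy evaluation is cheap. The idea is to keep, for every vertex, a counter of how many of its neighbors already lie in $V(T)$ (equivalently, how many do not). Whenever a vertex $v$ is added, these counters are refreshed for the neighbors of $v$ at cost $O(d_G(v))$, so the total maintenance over the algorithm is $\sum_{v} d_G(v) = 2|E| = O(|V|^2)$. With these counters available, evaluating the path expansion criterion in line 5 amounts to scanning the neighbors of the current endpoint $u$ and reading off their counters, which costs $O(d_G(u)) = O(|V|)$; since line 5 is reached at most $|V|$ times, this contributes $O(|V|^2)$. Likewise, the start-restart criterion in line 3 is resolved by a single scan over all vertices, testing the priority rules against the maintained counters, at cost $O(|V|)$ per outer iteration and hence $O(|V|^2)$ in total.

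Summing the three contributions --- counter maintenance, path expansion, and start-restart selection --- yields $O(|V|^2)$, as claimed. The main point requiring care is the bound on the number of outer iterations: one must verify that every start or restart genuinely makes progress, which rests on the connectivity of $G$ together with the fact that line 3 favors a tree vertex possessing unvisited neighbors. A secondary subtlety is to confirm that the greedy criteria really are computable from the locally maintained counters in a single neighborhood (resp.\ vertex) scan, so that no evaluation silently costs more than $O(|V|)$; recomputing the criteria from scratch at each step would instead inflate the bound.
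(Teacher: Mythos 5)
Your proof is correct and follows essentially the same route as the paper's: bound the total number of loop iterations by $O(|V|)$ (each selecting an edge of the spanning tree), and keep per-vertex counters of unvisited neighbors so that each greedy evaluation and each update costs only $O(|V|)$. The only differences are presentational refinements --- you justify the outer-loop bound explicitly via connectivity and charge counter maintenance globally as $\sum_v d_G(v) = 2|E|$, where the paper simply asserts the iteration count and charges $O(|V|)$ per step --- but the decomposition of the work and the resulting bound are identical.
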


\begin{proof}
A single evaluation of the start-restart or the path expansion criteria can be done in time proportional to $O(|V|)$, as it requires checking information available at each vertex. In particular, note that checking the number of neighboring vertices which are still not in the tree, \emph{e.g.} the last start-restart criterion and the inner loop condition, can be kept at constant asymptotic complexity, provided that we update this information when the tree is grown. 

The number of iterations of the outer and the inner \textbf{while} loops depends on the particular graph, but they sum to $|V|-1$ since these correspond to the selection of edges in a spanning tree. 
Therefore, the execution includes $|V|-1$ steps, each requiring an $O(|V|)$ evaluation (either start-restart or path expansion) and $O(|V|)$ operations to update the aforementioned information on the remaining vertices.
The algorithm thus runs in time proportional to $O(|V|-1) \times (O(|V|) + O(|V|)) = O(|V|^2)$.
\end{proof}

\subsection{Multi-path expanding heuristic}

The multi-path expanding heuristic differs from the path expanding heuristic in the fact that it allows the expansion of any of the paths being expanded at a given moment.
It also has a start-restart component, which is the same for the path-expanding heuristic. The multi-path expansion component uses the following greedy criterion:
\begin{itemize}
  \item multi-path expansion: vertex, adjacent to any vertex belonging to the tree, with the smallest number of neighbors which are still not in the tree.
\end{itemize}

The multi-path expanding heuristic is presented in Algorithm \ref{alg:multipathexpanding} and it works in the following way. 
Start the tree with a given vertex, selected according to the start-restart greedy criterion, which forms the list of candidates for path expansion. 
Continue selecting a vertex adjacent to one of the candidates for path expansion according to the multi-path expansion greedy criterion, add the new vertex to the list of candidates and remove the current candidate from the list of candidates when appropriate.
If there are no more vertices, which are still not in the tree, adjacent to the candidates and if the tree is still not a spanning tree, select another vertex to add to the list of candidates.
The algorithm is illustrated in Figure~\ref{fig:multipathexpanding}.

\begin{algorithm}[!ht]
\caption {Multi-Path-Expanding($G=(V,E)$)}
\label{alg:multipathexpanding}
\begin{algorithmic}[1]
	\STATE {$ V(T) \leftarrow \{u\}$, with $u$ selected according to the start-restart criterion}
	\STATE{$Cand \leftarrow \varnothing$} 
	\WHILE {$|T|<|V|-1$}
		\STATE{$Cand \leftarrow Cand \cup \{u\}, u \in V(T)$ according to the start-restart criterion}
	    \WHILE{ $\exists (u,v)$ such that $u \in Cand$ and $v \notin V(T)$}
	        \STATE {select such $(u,v)$ optimizing the multi-path expansion criterion}
	        \STATE {$T \leftarrow T + \{(u,v)\}$}
	        \STATE {$V(T) \leftarrow V(T) + \{v\}$}
		\IF {$d_T(u)=2$ and $u \notin L_0$}
			\STATE{$Cand \leftarrow Cand - \{u\}$}
		\ENDIF	        
		\STATE{$Cand \leftarrow Cand +\{v\}$}
	    \ENDWHILE
	\ENDWHILE
	\RETURN $T$
\end{algorithmic}
\end{algorithm}

 \begin{figure}[!ht]
\centering
\subfigure[]{
   \includegraphics[scale =0.55] {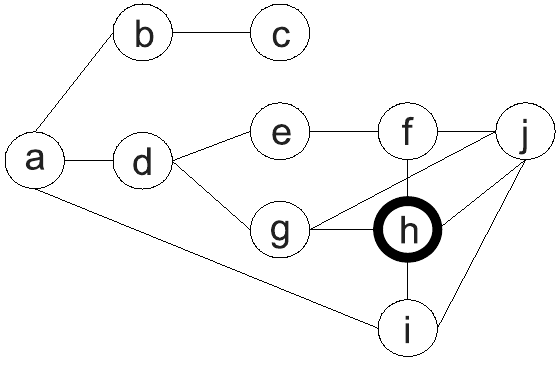}
   \label{fig:multipathexpanding_1} 
 }
 \subfigure[]{
   \includegraphics[scale =0.55] {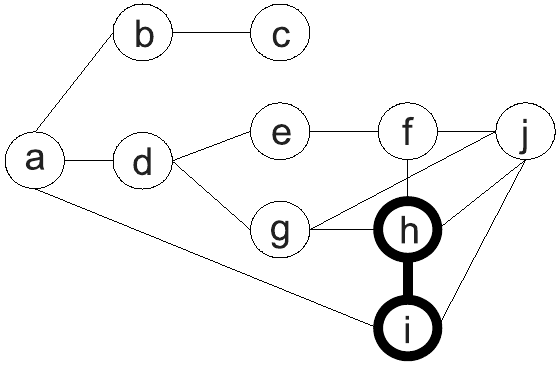}
   \label{fig:multipathexpanding_2}
 }
\subfigure[]{
   \includegraphics[scale =0.55] {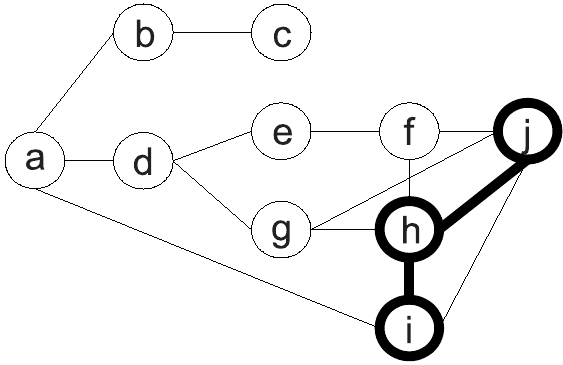}
   \label{fig:multipathexpanding_3} 
 }
 \subfigure[]{
   \includegraphics[scale =0.55] {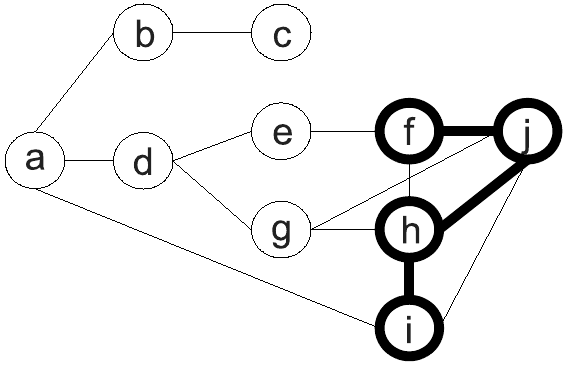}
   \label{fig:multipathexpanding_4}
 }\subfigure[]{
   \includegraphics[scale =0.55] {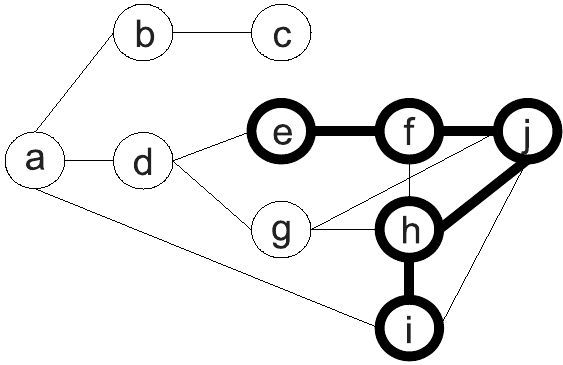}
   \label{fig:multipathexpanding_5} 
 }
 \subfigure[]{
   \includegraphics[scale =0.55] {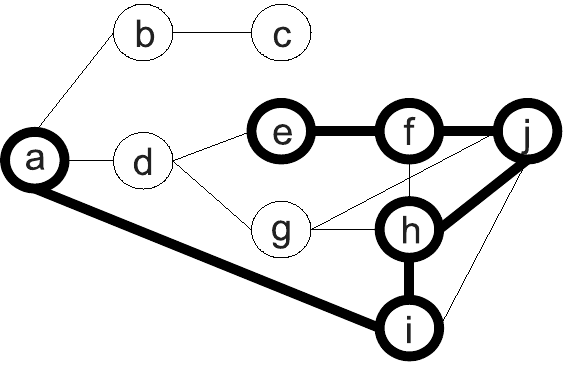}
   \label{fig:multipathexpanding_6}
 }\subfigure[]{
   \includegraphics[scale =0.55] {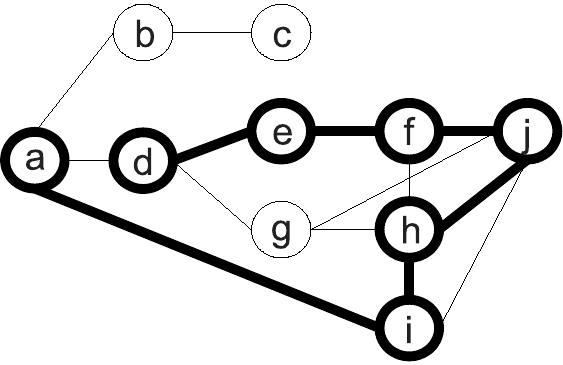}
   \label{fig:multipathexpanding_7} 
 }
 \subfigure[]{
   \includegraphics[scale =0.55] {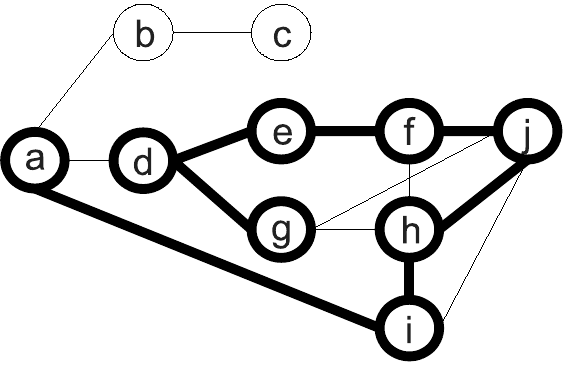}
   \label{fig:multipathexpanding_8}
 }\subfigure[]{
   \includegraphics[scale =0.55] {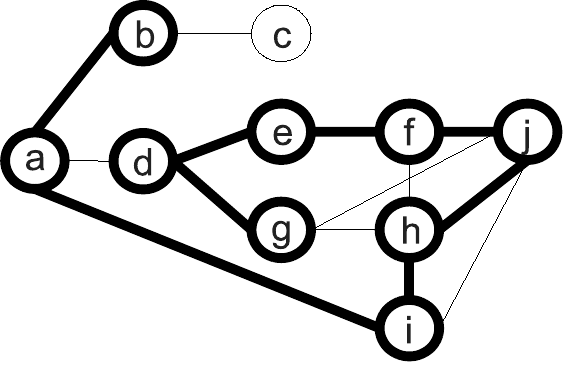}
   \label{fig:multipathexpanding_9} 
 }
 \subfigure[]{
   \includegraphics[scale =0.55] {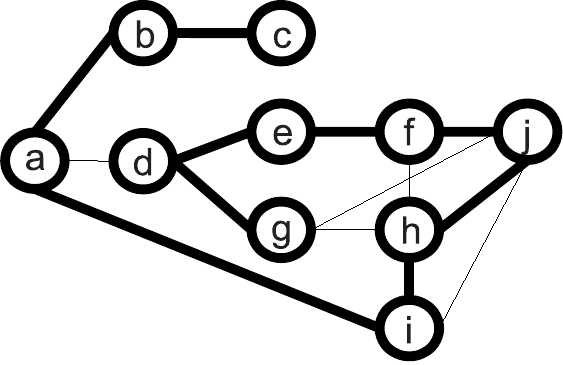}
   \label{fig:multipathexpanding_10} 
 }
\caption{Illustration of the multi-path expanding heuristic. The vertices and edges in bold are those belonging to the tree being constructed.}
\label{fig:multipathexpanding}
\end{figure}

\begin{Obs}
Algorithm \ref{alg:multipathexpanding} runs in $O(|V|^2)$.
\end{Obs}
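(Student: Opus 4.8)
The plan is to mirror the amortized counting argument used in the analysis of Algorithm~\ref{alg:pathexpanding}, adapting it to the two nested loops of Algorithm~\ref{alg:multipathexpanding}. First I would observe that each iteration of the inner \textbf{while} loop (lines 5--12) appends exactly one edge $(u,v)$ to $T$, so that across the whole execution the inner loop runs exactly $|V|-1$ times, since the output is a spanning tree. I would then bound the number of iterations of the outer \textbf{while} loop: because $G$ is connected, whenever $|T|<|V|-1$ there is always a vertex of $V(T)$ incident to a vertex outside $V(T)$, and the start-restart criterion in line 4 selects such a vertex, so each outer iteration produces at least one edge through the inner loop. Hence the number of restarts is at most $|V|-1$, and the algorithm performs $O(|V|)$ elementary steps in total; it then remains to show that each step costs $O(|V|)$.

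For the per-step cost, I would maintain, exactly as in the previous proof, a counter for every vertex $v$ giving the number of its neighbors not yet in $V(T)$, refreshed in $O(|V|)$ time whenever a vertex enters the tree by scanning its neighborhood. A single evaluation of the start-restart criterion then reduces to one pass over the vertices, costing $O(|V|)$. For the multi-path expansion criterion I would additionally maintain, for each $v \notin V(T)$, a count of how many of its neighbors currently belong to $Cand$; a non-tree vertex is expandable precisely when this count is positive, so selecting the edge $(u,v)$ that optimizes the criterion becomes a single scan over the non-tree vertices, choosing an expandable one of minimum counter value, again $O(|V|)$. Since every step combines one $O(|V|)$ criterion evaluation with $O(|V|)$ bookkeeping, the total running time is $O(|V|) \times (O(|V|)+O(|V|)) = O(|V|^2)$.

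The point where this proof genuinely departs from that of Algorithm~\ref{alg:pathexpanding} is the handling of the candidate removals in lines 9--11: unlike the single-path version, where expansion only ever concerns the last inserted vertex, here a non-tree vertex may lose its only expansion opportunity when the candidate adjacent to it is dropped from $Cand$. This is exactly why I would keep \emph{counts} of candidate-neighbors rather than mere boolean flags, since a count lets each insertion into or deletion from $Cand$ be processed by a single $O(d(\cdot))\subseteq O(|V|)$ scan of the affected vertex's neighborhood, updating the counts of its non-tree neighbors. The main obstacle is thus confirming that each step triggers only $O(1)$ such candidate-set operations (one insertion in line 12 and at most one deletion in line 10, plus one restart insertion per outer pass), so that the criterion now ranging over the entire candidate set rather than over a single vertex does not inflate the per-step work beyond $O(|V|)$; I do not expect the order of growth to change, the delicacy being purely in verifying that no step exceeds the $O(|V|)$ budget.
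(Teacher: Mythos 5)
Your proof is correct and follows essentially the same route as the paper's: both arguments bound the combined iterations of the two \textbf{while} loops by the $|V|-1$ edge additions of the spanning tree, and charge each step $O(|V|)$ for one greedy-criterion evaluation plus $O(|V|)$ bookkeeping, giving $O(|V|-1)\times(O(|V|)+O(|V|))=O(|V|^2)$. Your one refinement --- maintaining \emph{counts} of candidate-neighbors for the non-tree vertices rather than the paper's boolean flag of adjacency to the tree --- is actually a more careful treatment of removals from $Cand$ (which a boolean flag cannot undo in $O(|V|)$ time), a point the paper glosses over by asserting that the operations introduced by the candidate set take constant time.
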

\begin{proof}
The analysis is analogous to that for Algorithm \ref{alg:pathexpanding}. The main difference is that, to keep the multi-path expansion test in $O(|V|)$ time, we need to store the additional information of whether a given unvisited vertex is adjacent to one in the tree. This can be included in the structure update without increasing its asymptotic complexity. The operations introduced due to the \emph{candidates} set are done in constant order of complexity. 
    
Again, there is a total of $|V| - 1$ iterations considering both \textbf{while} loops, each selecting a new edge in the tree.
When needed, choosing a vertex to enter the set of candidates is done following the same start-restart criterion, in time $O(|V|)$.
In the inner loop, both the multi-path expansion and the updating step require $O(|V|)$ time. Therefore, the algorithm requires computational time proportional to $O(|V|-1) \times (O(|V|) + O(|V|)) = O(|V|^2)$.
\end{proof}

\section{Computational experiments}
\label{sec:computation}

The main goal of our computational experiments is to assess the effectiveness of the methods proposed in this paper.
The branch and cut algorithm presented by Rossi et al.~\cite{Rosetal14} is based 
on the formulation (\ref{obj})--(\ref{const-04}), and it is a 
\emph{state-of-the-art} among exact approaches to solve the MBV~\footnote{A 
very recent paper by Mar\'in~\cite{Mar15} might also be considered as a 
\emph{state-of-the-art}. In that article Rossi et al.~\cite{Rosetal14} is not 
cited and consequently no comparison is done. While the approach in 
Mar\'in~\cite{Mar15} seems to be better for some instances of Carrabs et al.~\cite{Caretal13}, it fails in finding optimal solutions for several instances of Silva et al.~\cite{Siletal14} within one hour of computation with parallel execution. 
Those instances are solved to optimality in this work with the same time limit and with a single thread of execution.}.
We implemented our own branch and cut algorithm using the formulation (\ref{obj})--(\ref{const-04}), and refer to it as the \emph{plain algorithm}.
Our main results compare it against one variant developed on top of the decomposition and primal heuristics we proposed in previous sections: henceforth, the \emph{enhanced algorithm}, using formulation (\ref{dec-obj})--(\ref{dec-04}).

The algorithms are 
implemented in \emph{C++}, using the callback mechanism in the Concert API of 
CPLEX 12.6. 
Since the objective function is integer-valued for all the available instances, we set the parameter regarding the absolute MIP gap tolerance to $0.9999$; all remaining options are used as default.
Also, the best solution among the path-expanding and the multi-path-expanding heuristics is given as a MIP start to the solver.
Experiments were carried out on a machine with an Intel Core i7-4790K (4.00GHz) CPU, with 16GB of RAM. 
A time limit of one hour (3600s wall clock) was imposed for every execution.

We remark that the whole procedure consisting of the decomposition and the constructive methods runs immediately for all the available benchmark instances. Considering any of those instances, the decomposed integer programming (IP) models and heuristic solutions were created in less than $0.02$ seconds. 

Most papers on the MBV use the large benchmark set of Carrabs et al.~\cite{Caretal13}, which includes 525 instances. Section \ref{sec:xp-carrabs} reports results using that benchmark. Nevertheless, Silva et al.~\cite{Siletal14} introduced different sets of instances; for the sake of completeness, we discuss in Section \ref{sec:xp-silva} the effectiveness of the procedures using those problem sets as well.

\subsection{Instances from Carrabs et al.~\cite{Caretal13}}
\label{sec:xp-carrabs}

We describe next our computational results using the standard benchmark of Carrabs et al.~\cite{Caretal13}, which was also used in~\cite{Onc14,Rosetal14}. 
The instances are identified as \texttt{Spd\_RF2\_n\_m\_r}, in which $n$ stands for the number of vertices, $m$ for the number of edges, and $r$ the seed used to randomly generate the edges of the graph.

We start by describing the effect of the decomposition and comparing the plain and enhanced algorithms. 
Table \ref{tab:openInstances} presents the results for all instances for which an optimal solution was not found by the plain algorithm within the one hour time limit.
The second and third columns indicate the number of obligatory branches (OB) and cut edges (CE) 
removed using the decomposition algorithm.
The next three columns concern the enhanced algorithm, and show the best lower and upper bounds and the remaining open gap (in \%) between them at the end of the execution. The last three columns give the same information for the plain algorithm.
We also show the time taken for the enhanced algorithm to solve the instances to optimality (the value 3600 is shown in case the instance remained unsolved at the end of the time limit).
The arithmetic mean is used for average values.

In the comparison, we argue in favor of the enhanced algorithm for two main reasons. First, it is consistently superior in finding better solutions for the MBV.
Note that the enhanced algorithm yields a smaller duality gap in 48 out of 50 instances ($96\%$). In fact, it is able to close the gap of 39 among these instances ($78\%$).

\begin{scriptsize}
\begin{longtable}[t]{@{}lcccccccccccc@{}}
\caption{Results regarding instances for which the plain branch and cut
algorithm cannot prove optimality within one hour of computation. For the starred instances, the enhanced algorithm is worse than the plain version. \label{tab:openInstances}}\\
\toprule
\multirow{2}{*}{Instance}    &    \multicolumn{2}{c}{Reduction}            &    \hspace{5pt}    &    \multicolumn{4}{c}{Enhanced Algorithm}                            &    \hspace{5pt}    &    \multicolumn{3}{c}{Plain Algorithm}                    \\
    &    OB    &    CE    &        &    LB    &    UB    &    Gap (\%)    &    Time (s)    &        &    LB    &    UB    &    Gap    (\%)\\
\midrule
\endfirsthead
\caption[]{\small (continued) Results regarding instances for which the plain branch and cut 
algorithm cannot prove optimality within one hour of computation. For the starred instances, the enhanced algorithm is worse than the plain version.}\\
\toprule
\multirow{2}{*}{Instance}	&	\multicolumn{2}{c}{Reduction}			&	\hspace{5pt}	&	\multicolumn{4}{c}{Enhanced Algorithm}							&	\hspace{5pt}	&	\multicolumn{3}{c}{Plain Algorithm}					\\
	&	OB	&	CE	&		&	LB	&	UB	&	Gap (\%)	&	Time (s)	&		&	LB	&	UB	&	Gap	(\%)\\
\midrule
\endhead
\texttt{Spd\_RF2\_400\_519\_4731}   &   52  &   155 &       &   70  &   70  &   0   &   227.9   &       &    67.66  &   70  &   3.3 \\
\texttt{Spd\_RF2\_450\_548\_4915}   &   68  &   205 &       &   89  &   89  &   0   &   324.4   &       &    88.00  &   89  &   1.1 \\
\texttt{Spd\_RF2\_450\_581\_4947}   &   59  &   178 &       &   76.38   &   80  &   4.5 &   3600.0  &       &    75.86  &   80  &   5.2 \\
\texttt{Spd\_RF2\_450\_581\_4963}   &   61  &   178 &       &   77  &   77  &   0   &   125.8   &       &    75.81  &   78  &   2.8 \\
\texttt{Spd\_RF2\_450\_614\_4979*}  &   45  &   149 &       &   64.35   &   67  &   4.0 &   3600.0  &       &    64.34  &   66  &   2.5 \\
\texttt{Spd\_RF2\_450\_614\_5003}   &   44  &   153 &       &   67  &   67  &   0   &   901.8   &       &    64.55  &   68  &   5.1 \\
\texttt{Spd\_RF2\_500\_603\_5091}   &   90  &   264 &       &   109 &   109 &   0   &   176.1   &       &    107.37 &   109 &   1.5 \\
\texttt{Spd\_RF2\_500\_672\_5171}   &   58  &   180 &       &   79.72   &   82  &   2.8 &   3600.0  &       &    78.24  &   81  &   3.4 \\
\texttt{Spd\_RF2\_500\_672\_5179}   &   52  &   171 &       &   75.45   &   77  &   2.0 &   3600.0  &       &    73.83  &   78  &   5.3 \\
\texttt{Spd\_RF2\_500\_672\_5187*}  &   47  &   155 &       &   69.51   &   72  &   3.5 &   3600.0  &       &    69.53  &   71  &   2.1 \\
\texttt{Spd\_RF2\_500\_672\_5195}   &   57  &   171 &       &   77  &   77  &   0   &   2128.6  &       &    75.69  &   78  &   3.0 \\
\texttt{Spd\_RF2\_500\_672\_5203}   &   57  &   173 &       &   74.58   &   76  &   1.9 &   3600.0  &       &    73.47  &   78  &   5.8 \\
\texttt{Spd\_RF2\_600\_749\_5355}   &   136 &   370 &       &   143 &   143 &   0   &   462.7   &       &    142.00 &   143 &   0.7 \\
\texttt{Spd\_RF2\_700\_821\_5523}   &   164 &   467 &       &   182 &   182 &   0   &   12.6    &       &    180.71 &   183 &   1.3 \\
\texttt{Spd\_RF2\_700\_861\_5563}   &   151 &   437 &       &   159.67  &   161 &   0.8 &   3600.0  &       &    159.72 &   161 &   0.8 \\
\texttt{Spd\_RF2\_700\_902\_5579}   &   143 &   408 &       &   152 &   152 &   0   &   1106.3  &       &    150.73 &   153 &   1.5 \\
\texttt{Spd\_RF2\_700\_902\_5595}   &   143 &   401 &       &   153 &   153 &   0   &   130.7   &       &    151.86 &   154 &   1.4 \\
\texttt{Spd\_RF2\_800\_886\_5659}   &   212 &   601 &       &   227 &   227 &   0   &   10.5    &       &    225.67 &   228 &   1.0 \\
\texttt{Spd\_RF2\_800\_930\_5715}   &   200 &   549 &       &   212 &   212 &   0   &   49.0    &       &    209.42 &   213 &   1.7 \\
\texttt{Spd\_RF2\_800\_973\_5747}   &   183 &   506 &       &   197 &   197 &   0   &   3461.6  &       &    194.36 &   198 &   1.8 \\
\texttt{Spd\_RF2\_800\_973\_5763}   &   179 &   504 &       &   195 &   195 &   0   &   1057.0  &       &    191.11 &   197 &   3.0 \\
\texttt{Spd\_RF2\_800\_1017\_5771}  &   164 &   470 &       &   176 &   176 &   0   &   1208.5  &       &    173.50 &   178 &   2.5 \\
\texttt{Spd\_RF2\_800\_1017\_5779}  &   158 &   467 &       &   173 &   173 &   0   &   2375.2  &       &    171.25 &   173 &   1.0 \\
\texttt{Spd\_RF2\_800\_1017\_5787}  &   165 &   468 &       &   178 &   178 &   0   &   108.6   &       &    175.71 &   181 &   2.9 \\
\texttt{Spd\_RF2\_900\_989\_5859}   &   238 &   688 &       &   259 &   259 &   0   &   63.7    &       &    254.08 &   259 &   1.9 \\
\texttt{Spd\_RF2\_900\_1034\_5891}  &   228 &   640 &       &   242 &   242 &   0   &   23.9    &       &    238.78 &   243 &   1.7 \\
\texttt{Spd\_RF2\_900\_1034\_5907}  &   220 &   621 &       &   239 &   239 &   0   &   1411.3  &       &    236.19 &   240 &   1.6 \\
\texttt{Spd\_RF2\_900\_1034\_5915}  &   228 &   642 &       &   242 &   242 &   0   &   466.1   &       &    238.99 &   243 &   1.7 \\
\texttt{Spd\_RF2\_900\_1079\_5931}  &   206 &   579 &       &   222 &   222 &   0   &   332.6   &       &    218.77 &   223 &   1.9 \\
\texttt{Spd\_RF2\_900\_1079\_5939}  &   207 &   580 &       &   224 &   224 &   0   &   349.5   &       &    220.67 &   226 &   2.4 \\
\texttt{Spd\_RF2\_900\_1079\_5947}  &   210 &   582 &       &   226 &   226 &   0   &   888.0   &       &    220.81 &   230 &   4.0 \\
\texttt{Spd\_RF2\_900\_1079\_5963}  &   205 &   588 &       &   222 &   222 &   0   &   27.9    &       &    218.52 &   223 &   2.0 \\
\texttt{Spd\_RF2\_900\_1124\_5971}  &   198 &   543 &       &   210 &   210 &   0   &   836.7   &       &    207.86 &   210 &   1.0 \\
\texttt{Spd\_RF2\_900\_1124\_5979}  &   198 &   549 &       &   210 &   210 &   0   &   167.9   &       &    207.16 &   210 &   1.4 \\
\texttt{Spd\_RF2\_900\_1124\_5987}  &   190 &   546 &       &   203 &   203 &   0   &   340.6   &       &    201.54 &   204 &   1.2 \\
\texttt{Spd\_RF2\_900\_1124\_6003}  &   189 &   549 &       &   202 &   202 &   0   &   3120.8  &       &    199.06 &   203 &   1.9 \\
\texttt{Spd\_RF2\_1000\_1143\_6091} &   252 &   704 &       &   272 &   272 &   0   &   2334.6  &       &    268.02 &   274 &   2.2 \\
\texttt{Spd\_RF2\_1000\_1143\_6099} &   254 &   711 &       &   271 &   271 &   0   &   9.5 &       &    268.24 &   272 &   1.4 \\
\texttt{Spd\_RF2\_1000\_1143\_6107} &   251 &   698 &       &   272 &   272 &   0   &   108.1   &       &    267.17 &   273 &   2.1 \\
\texttt{Spd\_RF2\_1000\_1143\_6115} &   253 &   700 &       &   272 &   272 &   0   &   314.5   &       &    269.87 &   272 &   0.8 \\
\texttt{Spd\_RF2\_1000\_1143\_6123} &   251 &   712 &       &   269 &   269 &   0   &   906.0   &       &    268.40 &   270 &   0.6 \\
\texttt{Spd\_RF2\_1000\_1191\_6131} &   228 &   662 &       &   247.67  &   249 &   0.5 &   3600.0  &       &    243.83 &   250 &   2.5 \\
\texttt{Spd\_RF2\_1000\_1191\_6139} &   235 &   659 &       &   251 &   251 &   0   &   362.4   &       &    248.11 &   252 &   1.5 \\
\texttt{Spd\_RF2\_1000\_1191\_6147} &   233 &   656 &       &   250 &   250 &   0   &   945.6   &       &    246.47 &   251 &   1.8 \\
\texttt{Spd\_RF2\_1000\_1191\_6155} &   235 &   655 &       &   253 &   253 &   0   &   193.9   &       &    251.54 &   253 &   0.6 \\
\texttt{Spd\_RF2\_1000\_1239\_6171} &   218 &   608 &       &   232 &   232 &   0   &   90.9    &       &    230.11 &   232 &   0.8 \\
\texttt{Spd\_RF2\_1000\_1239\_6179} &   222 &   605 &       &   236 &   237 &   0.4 &   3600.0  &       &    231.26 &   239 &   3.2 \\
\texttt{Spd\_RF2\_1000\_1239\_6187} &   217 &   611 &       &   229.81  &   233 &   1.4 &   3600.0  &       &    225.33 &   237 &   4.9 \\
\texttt{Spd\_RF2\_1000\_1239\_6195} &   222 &   616 &       &   238.01  &   241 &   1.2 &   3600.0  &       &    229.71 &   243 &   5.5 \\
\texttt{Spd\_RF2\_1000\_1239\_6203} &   218 &   609 &       &   236 &   236 &   0   &   2850.6  &       &    230.84 &   237 &   2.6 \\
\midrule
Average	&	20.4\%	&	48.1\%	&		&		&	 	&	 0.46	&		&		&	 	&		&	 2.3	\\
\bottomrule
\end{longtable}
\end{scriptsize}

For all the 475 remaining cases in this benchmark, the enhanced algorithm makes finding optimal solutions for the MBV a much faster task, as we present in Table \ref{tab:solvedInstances}.
Given the large number of instances, we present the arithmetic mean after aggregating the results for instances with the same number of vertices (\emph{n}) and edges (\emph{m}).
Regarding the smaller instances set (up to 500 vertices), the decomposition approach is able to reduce 15.9\% of the vertices and 38.6\% of the edges, on average, thus yielding the optimal solution in a smaller amount of time.
As for the larger instances set (from 600 to 1,000 vertices), the average reduction is of 24.8\% of the vertices and 61.3\% of the edges, achieving a solution much faster.

\begin{scriptsize}
\begin{longtable}[t]{@{}cccccc|cccccc@{}}
\caption{Results regarding instances for which both the plain branch and cut algorithm and the enhanced one are able to prove optimality within one hour of computation. The \emph{enhanced algorithm} features the decomposition and heuristic methods. Columns OB and CE indicate the number of obligatory branches and cut edges removed, respectively. \label{tab:solvedInstances}}\\
\toprule
\multicolumn{2}{c}{Instances}			&	\multicolumn{2}{c}{Reduction Avg.}			&	\multicolumn{2}{c|}{Time Avg. (s) }			&	\multicolumn{2}{c}{Instances}			&	\multicolumn{2}{c}{Reduction Avg.}			&	\multicolumn{2}{c}{Time Avg. (s)}			\\
n	&	m	&	OB	&	CE	&	Enhanced	&	Plain	&	n	&	m	&	OB	&	CE	&	Enhanced	&	Plain	\\
\midrule
\endfirsthead
\caption[]{\small (continued) Results regarding instances for which both the plain branch and cut algorithm and the enhanced one are able to prove optimality within one hour of computation. The \emph{enhanced algorithm} features the decomposition and heuristic methods. Columns OB and CE indicate the number of obligatory branches and cut edges removed, respectively.}\\
\toprule
\multicolumn{2}{c}{Instances}			&	\multicolumn{2}{c}{Reduction Avg.}			&	\multicolumn{2}{c|}{Time Avg. (s) }			&	\multicolumn{2}{c}{Instances}			&	\multicolumn{2}{c}{Reduction Avg.}			&	\multicolumn{2}{c}{Time Avg. (s)}			\\
n	&	m	&	OB	&	CE	&	Enhanced	&	Plain	&	n	&	m	&	OB	&	CE	&	Enhanced	&	Plain	\\
\midrule
\endhead
200	&	222	&	45.8	&	127.8	&	0.1	&	1.5	&	600	&	637	&	173.0	&	493.6	&	0.8	&	26.7	\\
200	&	244	&	32.2	&	92.4	&	1.1	&	4.4	&	600	&	674	&	156.2	&	437.4	&	5.3	&	131.9	\\
200	&	267	&	23.4	&	69.0	&	1.6	&	10.0	&	600	&	712	&	139.6	&	394.4	&	143.0	&	406.6	\\
200	&	289	&	15.4	&	56.8	&	78.5	&	77.2	&	600	&	749	&	129.8	&	361.8	&	55.9	&	837.8	\\
200	&	312	&	11.0	&	42.2	&	11.5	&	15.2	&	600	&	787	&	119.4	&	333.6	&	202.9	&	849.3	\\
250	&	273	&	60.0	&	164.4	&	0.1	&	2.8	&	700	&	740	&	203.0	&	576.8	&	0.7	&	52.0	\\
250	&	297	&	43.2	&	120.8	&	3.0	&	14.9	&	700	&	780	&	185.2	&	518.4	&	9.3	&	242.5	\\
250	&	321	&	34.2	&	101.8	&	3.2	&	14.5	&	700	&	821	&	167.3	&	471.0	&	54.8	&	1285.6	\\
250	&	345	&	24.4	&	76.2	&	42.9	&	76.5	&	700	&	861	&	154.8	&	436.5	&	758.4	&	241.9	\\
250	&	369	&	16.8	&	60.0	&	63.6	&	59.2	&	700	&	902	&	147.0	&	402.3	&	25.0	&	773.5	\\
300	&	326	&	73.2	&	203.0	&	0.3	&	5.3	&	800	&	1017	&	167.0	&	468.0	&	222.2	&	325.2	\\
300	&	353	&	57.8	&	160.2	&	3.8	&	17.8	&	800	&	843	&	232.0	&	666.8	&	1.2	&	69.4	\\
300	&	380	&	43.0	&	124.8	&	20.1	&	57.4	&	800	&	886	&	213.8	&	599.0	&	5.4	&	503.8	\\
300	&	407	&	34.4	&	104.6	&	60.0	&	117.2	&	800	&	930	&	193.5	&	546.0	&	8.4	&	343.1	\\
300	&	434	&	25.6	&	86.4	&	279.0	&	146.2	&	800	&	973	&	180.0	&	506.3	&	1183.7	&	2598.0	\\
350	&	378	&	85.4	&	238.8	&	0.6	&	7.9	&	900	&	1034	&	222.5	&	631.0	&	54.4	&	616.9	\\
350	&	406	&	67.4	&	190.0	&	12.3	&	48.2	&	900	&	1079	&	207.0	&	589.0	&	171.9	&	775.6	\\
350	&	435	&	50.4	&	151.0	&	359.2	&	402.0	&	900	&	1124	&	197.0	&	551.0	&	76.0	&	870.5	\\
350	&	463	&	40.2	&	124.2	&	514.6	&	1059.5	&	900	&	944	&	262.2	&	756.4	&	1.7	&	111.5	\\
350	&	492	&	29.0	&	103.5	&	73.3	&	262.5	&	900	&	989	&	242.0	&	685.0	&	36.6	&	941.7	\\
400	&	429	&	102.4	&	282.6	&	0.8	&	39.4	&	1000	&	1047	&	296.2	&	849.6	&	3.8	&	222.3	\\
400	&	459	&	78.6	&	226.4	&	11.1	&	53.2	&	1000	&	1095	&	268.8	&	767.0	&	32.8	&	820.0	\\
400	&	489	&	62.2	&	184.8	&	44.0	&	228.4	&	1000	&	1191	&	233.0	&	656.0	&	634.6	&	1524.8	\\
400	&	519	&	51.0	&	154.0	&	322.6	&	1198.5	&		&		&		&		&		&		\\
400	&	549	&	41.8	&	131.2	&	700.5	&	1058.8	&		&		&		&		&		&		\\
450	&	482	&	115.0	&	318.6	&	0.9	&	44.5	&		&		&		&		&		&		\\
450	&	515	&	91.6	&	250.6	&	17.4	&	308.5	&		&		&		&		&		&		\\
450	&	548	&	74.0	&	209.8	&	44.3	&	442.5	&		&		&		&		&		&		\\
450	&	581	&	62.0	&	177.5	&	452.8	&	594.3	&		&		&		&		&		&		\\
450	&	614	&	47.7	&	152.3	&	909.5	&	1278.0	&		&		&		&		&		&		\\
500	&	534	&	131.8	&	361.0	&	1.5	&	46.7	&		&		&		&		&		&		\\
500	&	568	&	106.0	&	294.2	&	26.8	&	256.4	&		&		&		&		&		&		\\
500	&	603	&	86.0	&	241.5	&	318.5	&	1404.5	&		&		&		&		&		&		\\
500	&	637	&	71.8	&	210.6	&	556.6	&	2103.1	&		&		&		&		&		&		\\
\midrule
\multicolumn{2}{c}{Average:}	&	15.9\%	&	38.6\%	&	145.2	&	337.0	& \multicolumn{2}{c}{Average:}	&	24.8\%	&	61.3\%	&	160.4	&	633.5 \\
\bottomrule
\end{longtable}
\end{scriptsize}

Finally, we compare the performance of our heuristics with the best primal feasible solution cost obtained by the Lagrangian heuristics of Carrabs et al.~\cite{Caretal13}. The authors compiled their best bounds on a set of 175 problem instances.
Table \ref{tab:heuristicsCarrabs} compares those values with the ones provided by our heuristics; we report the best value among the path-expanding and the multi-path expanding algorithms.

We highlight that our proposed constructive algorithms provide better warm starts. In fact, 
they provide better primal bounds to 103 instances. On the other hand, the best result presented by Carrabs et al.~\cite{Caretal13} is better in 54 instances, and is equal to the solution obtained with our constructive algorithms in 18 instances.

\newpage

\begin{scriptsize}
\begin{longtable}{@{}ccc|ccc@{}}
\caption{\small Best primal solution constructed by our proposed algorithms and by the Lagrangian heuristic of Carrabs et al.~\cite{Caretal13}. Values in bold indicate the 103 instances (out of 175) in which our heuristics provide a better solution. \label{tab:heuristicsCarrabs}}\\
\toprule
\multirow{2}{*}{Instance}    &    Best constructed    &    Best result    &    \multirow{2}{*}{Instance}    &    Best constructed    &    Best result    \\
    &    solution    &    by \cite{Caretal13}    &        &    solution    &    by \cite{Caretal13}    \\
\midrule
\endfirsthead
\caption[]{\small (continued) Best primal solution constructed by our proposed algorithms and by the Lagrangian heuristic of Carrabs et al.~\cite{Caretal13}. Values in bold indicate the 103 instances (out of 175) in which our heuristics provide a better solution.}\\
\toprule
\multirow{2}{*}{Instance}    &    Best constructed    &    Best result    &    \multirow{2}{*}{Instance}    &    Best constructed    &    Best result    \\
    &    solution    &    by \cite{Caretal13}    &        &    solution    &    by \cite{Caretal13}    \\
\midrule
\endhead
\texttt{Spd\_RF2\_200\_222\_3811}	&	55	&	54	&	  \texttt{Spd\_RF2\_350\_435\_4515}	&	81	&	77	\\
\texttt{Spd\_RF2\_200\_222\_3819}	&	55	&	54	&	  \texttt{Spd\_RF2\_350\_435\_4523}	&	75	&	74	\\
\texttt{Spd\_RF2\_200\_222\_3827}	&	53	&	51	&	  \texttt{Spd\_RF2\_350\_463\_4531}	&	\textbf{69}	&	70	\\
\texttt{Spd\_RF2\_200\_222\_3835}	&	54	&	52	&	  \texttt{Spd\_RF2\_350\_463\_4539}	&	70	&	70	\\
\texttt{Spd\_RF2\_200\_222\_3843}	&	55	&	54	&	  \texttt{Spd\_RF2\_350\_463\_4547}	&	\textbf{67}	&	71	\\
\texttt{Spd\_RF2\_200\_244\_3851}	&	\textbf{43}	&	46	&	  \texttt{Spd\_RF2\_350\_463\_4555}	&	\textbf{68}	&	70	\\
\texttt{Spd\_RF2\_200\_244\_3859}	&	\textbf{46}	&	49	&	  \texttt{Spd\_RF2\_350\_463\_4563}	&	\textbf{66}	&	73	\\
\texttt{Spd\_RF2\_200\_244\_3867}	&	45	&	43	&	  \texttt{Spd\_RF2\_350\_492\_4571}	&	\textbf{55}	&	62	\\
\texttt{Spd\_RF2\_200\_244\_3875}	&	\textbf{42}	&	43	&	  \texttt{Spd\_RF2\_350\_492\_4579}	&	\textbf{62}	&	64	\\
\texttt{Spd\_RF2\_200\_244\_3883}	&	45	&	45	&	  \texttt{Spd\_RF2\_350\_492\_4587}	&	\textbf{59}	&	60	\\
\texttt{Spd\_RF2\_200\_267\_3891}	&	\textbf{36}	&	37	&	  \texttt{Spd\_RF2\_350\_492\_4595}	&	59	&	57	\\
\texttt{Spd\_RF2\_200\_267\_3899}	&	\textbf{37}	&	39	&	  \texttt{Spd\_RF2\_350\_492\_4603}	&	\textbf{59}	&	63	\\
\texttt{Spd\_RF2\_200\_267\_3907}	&	37	&	37	&	  \texttt{Spd\_RF2\_400\_429\_4611}	&	118	&	118	\\
\texttt{Spd\_RF2\_200\_267\_3915}	&	\textbf{36}	&	39	&	  \texttt{Spd\_RF2\_400\_429\_4619}	&	118	&	116	\\
\texttt{Spd\_RF2\_200\_267\_3923}	&	\textbf{36}	&	41	&	  \texttt{Spd\_RF2\_400\_429\_4627}	&	\textbf{115}	&	116	\\
\texttt{Spd\_RF2\_200\_289\_3931}	&	\textbf{28}	&	30	&	  \texttt{Spd\_RF2\_400\_429\_4635}	&	117	&	115	\\
\texttt{Spd\_RF2\_200\_289\_3939}	&	\textbf{29}	&	35	&	  \texttt{Spd\_RF2\_400\_429\_4643}	&	118	&	117	\\
\texttt{Spd\_RF2\_200\_289\_3947}	&	\textbf{31}	&	35	&	  \texttt{Spd\_RF2\_400\_459\_4651}	&	106	&	105	\\
\texttt{Spd\_RF2\_200\_289\_3955}	&	\textbf{31}	&	34	&	  \texttt{Spd\_RF2\_400\_459\_4659}	&	101	&	100	\\
\texttt{Spd\_RF2\_200\_289\_3963}	&	\textbf{30}	&	32	&	  \texttt{Spd\_RF2\_400\_459\_4667}	&	\textbf{102}	&	106	\\
\texttt{Spd\_RF2\_200\_312\_3971}	&	24	&	23	&	  \texttt{Spd\_RF2\_400\_459\_4675}	&	\textbf{104}	&	106	\\
\texttt{Spd\_RF2\_200\_312\_3979}	&	\textbf{23}	&	27	&	  \texttt{Spd\_RF2\_400\_459\_4683}	&	103	&	103	\\
\texttt{Spd\_RF2\_200\_312\_3987}	&	\textbf{26}	&	29	&	  \texttt{Spd\_RF2\_400\_489\_4691}	&	87	&	87	\\
\texttt{Spd\_RF2\_200\_312\_3995}	&	\textbf{23}	&	26	&	  \texttt{Spd\_RF2\_400\_489\_4699}	&	\textbf{92}	&	98	\\
\texttt{Spd\_RF2\_200\_312\_4003}	&	\textbf{28}	&	30	&	  \texttt{Spd\_RF2\_400\_489\_4707}	&	\textbf{90}	&	95	\\
\texttt{Spd\_RF2\_250\_273\_4011}	&	71	&	71	&	  \texttt{Spd\_RF2\_400\_489\_4715}	&	91	&	88	\\
\texttt{Spd\_RF2\_250\_273\_4019}	&	70	&	66	&	  \texttt{Spd\_RF2\_400\_489\_4723}	&	92	&	92	\\
\texttt{Spd\_RF2\_250\_273\_4027}	&	70	&	69	&	  \texttt{Spd\_RF2\_400\_519\_4731}	&	\textbf{79}	&	83	\\
\texttt{Spd\_RF2\_250\_273\_4035}	&	71	&	69	&	  \texttt{Spd\_RF2\_400\_519\_4739}	&	\textbf{77}	&	85	\\
\texttt{Spd\_RF2\_250\_273\_4043}	&	\textbf{67}	&	70	&	  \texttt{Spd\_RF2\_400\_519\_4747}	&	\textbf{79}	&	82	\\
\texttt{Spd\_RF2\_250\_297\_4051}	&	60	&	60	&	  \texttt{Spd\_RF2\_400\_519\_4755}	&	\textbf{80}	&	85	\\
\texttt{Spd\_RF2\_250\_297\_4059}	&	\textbf{61}	&	63	&	  \texttt{Spd\_RF2\_400\_519\_4763}	&	\textbf{78}	&	84	\\
\texttt{Spd\_RF2\_250\_297\_4067}	&	\textbf{57}	&	56	&	  \texttt{Spd\_RF2\_400\_549\_4771}	&	\textbf{68}	&	77	\\
\texttt{Spd\_RF2\_250\_297\_4075}	&	\textbf{59}	&	63	&	  \texttt{Spd\_RF2\_400\_549\_4779}	&	\textbf{66}	&	73	\\
\texttt{Spd\_RF2\_250\_297\_4083}	&	61	&	60	&	  \texttt{Spd\_RF2\_400\_549\_4787}	&	72	&	71	\\
\texttt{Spd\_RF2\_250\_321\_4091}	&	\textbf{52}	&	57	&	  \texttt{Spd\_RF2\_400\_549\_4795}	&	71	&	69	\\
\texttt{Spd\_RF2\_250\_321\_4099}	&	\textbf{50}	&	53	&	  \texttt{Spd\_RF2\_400\_549\_4803}	&	\textbf{73}	&	75	\\
\texttt{Spd\_RF2\_250\_321\_4107}	&	\textbf{49}	&	51	&	  \texttt{Spd\_RF2\_450\_482\_4811}	&	131	&	128	\\
\texttt{Spd\_RF2\_250\_321\_4115}	&	\textbf{48}	&	54	&	  \texttt{Spd\_RF2\_450\_482\_4819}	&	130	&	130	\\
\texttt{Spd\_RF2\_250\_321\_4123}	&	\textbf{50}	&	54	&	  \texttt{Spd\_RF2\_450\_482\_4827}	&	133	&	133	\\
\texttt{Spd\_RF2\_250\_345\_4131}	&	\textbf{39}	&	41	&	  \texttt{Spd\_RF2\_450\_482\_4835}	&	\textbf{132}	&	133	\\
\texttt{Spd\_RF2\_250\_345\_4139}	&	48	&	47	&	  \texttt{Spd\_RF2\_450\_482\_4843}	&	132	&	130	\\
\texttt{Spd\_RF2\_250\_345\_4147}	&	45	&	44	&	  \texttt{Spd\_RF2\_450\_515\_4851}	&	\textbf{116}	&	118	\\
\texttt{Spd\_RF2\_250\_345\_4155}	&	\textbf{44}	&	45	&	  \texttt{Spd\_RF2\_450\_515\_4859}	&	118	&	118	\\
\texttt{Spd\_RF2\_250\_345\_4163}	&	\textbf{41}	&	47	&	  \texttt{Spd\_RF2\_450\_515\_4867}	&	\textbf{112}	&	120	\\
\texttt{Spd\_RF2\_250\_369\_4171}	&	38	&	38	&	  \texttt{Spd\_RF2\_450\_515\_4875}	&	118	&	118	\\
\texttt{Spd\_RF2\_250\_369\_4179}	&	\textbf{33}	&	35	&	  \texttt{Spd\_RF2\_450\_515\_4883}	&	117	&	114	\\
\texttt{Spd\_RF2\_250\_369\_4187}	&	\textbf{35}	&	37	&	  \texttt{Spd\_RF2\_450\_548\_4891}	&	109	&	105	\\
\texttt{Spd\_RF2\_250\_369\_4195}	&	\textbf{35}	&	38	&	  \texttt{Spd\_RF2\_450\_548\_4899}	&	\textbf{101}	&	105	\\
\texttt{Spd\_RF2\_250\_369\_4203}	&	\textbf{36}	&	40	&	  \texttt{Spd\_RF2\_450\_548\_4907}	&	\textbf{100}	&	101	\\
\texttt{Spd\_RF2\_300\_326\_4211}	&	86	&	85	&	  \texttt{Spd\_RF2\_450\_548\_4915}	&	\textbf{101}	&	106	\\
\texttt{Spd\_RF2\_300\_326\_4219}	&	87	&	83	&	  \texttt{Spd\_RF2\_450\_548\_4923}	&	\textbf{105}	&	107	\\
\texttt{Spd\_RF2\_300\_326\_4227}	&	87	&	85	&	  \texttt{Spd\_RF2\_450\_581\_4931}	&	93	&	93	\\
\texttt{Spd\_RF2\_300\_326\_4235}	&	85	&	85	&	  \texttt{Spd\_RF2\_450\_581\_4939}	&	\textbf{90}	&	93	\\
\texttt{Spd\_RF2\_300\_326\_4243}	&	85	&	84	&	  \texttt{Spd\_RF2\_450\_581\_4947}	&	\textbf{95}	&	99	\\
\texttt{Spd\_RF2\_300\_353\_4251}	&	75	&	73	&	  \texttt{Spd\_RF2\_450\_581\_4955}	&	\textbf{91}	&	97	\\
\texttt{Spd\_RF2\_300\_353\_4259}	&	73	&	72	&	  \texttt{Spd\_RF2\_450\_581\_4963}	&	\textbf{90}	&	96	\\
\texttt{Spd\_RF2\_300\_353\_4267}	&	77	&	77	&	  \texttt{Spd\_RF2\_450\_614\_4971}	&	\textbf{82}	&	90	\\
\texttt{Spd\_RF2\_300\_353\_4275}	&	77	&	76	&	  \texttt{Spd\_RF2\_450\_614\_4979}	&	\textbf{80}	&	88	\\
\texttt{Spd\_RF2\_300\_353\_4283}	&	\textbf{75}	&	76	&	  \texttt{Spd\_RF2\_450\_614\_4987}	&	\textbf{79}	&	85	\\
\texttt{Spd\_RF2\_300\_380\_4291}	&	\textbf{65}	&	69	&	  \texttt{Spd\_RF2\_450\_614\_4995}	&	\textbf{79}	&	86	\\
\texttt{Spd\_RF2\_300\_380\_4299}	&	66	&	66	&	  \texttt{Spd\_RF2\_450\_614\_5003}	&	\textbf{83}	&	87	\\
\texttt{Spd\_RF2\_300\_380\_4307}	&	\textbf{62}	&	64	&	  \texttt{Spd\_RF2\_500\_534\_5011}	&	147	&	145	\\
\texttt{Spd\_RF2\_300\_380\_4315}	&	\textbf{58}	&	63	&	  \texttt{Spd\_RF2\_500\_534\_5019}	&	148	&	147	\\
\texttt{Spd\_RF2\_300\_380\_4323}	&	\textbf{62}	&	66	&	  \texttt{Spd\_RF2\_500\_534\_5027}	&	150	&	146	\\
\texttt{Spd\_RF2\_300\_407\_4331}	&	\textbf{56}	&	60	&	  \texttt{Spd\_RF2\_500\_534\_5035}	&	150	&	148	\\
\texttt{Spd\_RF2\_300\_407\_4339}	&	60	&	58	&	  \texttt{Spd\_RF2\_500\_534\_5043}	&	147	&	145	\\
\texttt{Spd\_RF2\_300\_407\_4347}	&	53	&	63	&	  \texttt{Spd\_RF2\_500\_568\_5051}	&	129	&	128	\\
\texttt{Spd\_RF2\_300\_407\_4355}	&	\textbf{53}	&	56	&	  \texttt{Spd\_RF2\_500\_568\_5059}	&	\textbf{129}	&	132	\\
\texttt{Spd\_RF2\_300\_407\_4363}	&	\textbf{56}	&	59	&	  \texttt{Spd\_RF2\_500\_568\_5067}	&	\textbf{131}	&	132	\\
\texttt{Spd\_RF2\_300\_434\_4371}	&	\textbf{46}	&	50	&	  \texttt{Spd\_RF2\_500\_568\_5075}	&	133	&	131	\\
\texttt{Spd\_RF2\_300\_434\_4379}	&	\textbf{45}	&	47	&	  \texttt{Spd\_RF2\_500\_568\_5083}	&	\textbf{130}	&	132	\\
\texttt{Spd\_RF2\_300\_434\_4387}	&	\textbf{46}	&	47	&	  \texttt{Spd\_RF2\_500\_603\_5091}	&	\textbf{118}	&	125	\\
\texttt{Spd\_RF2\_300\_434\_4395}	&	\textbf{48}	&	53	&	  \texttt{Spd\_RF2\_500\_603\_5099}	&	116	&	115	\\
\texttt{Spd\_RF2\_300\_434\_4403}	&	\textbf{49}	&	56	&	  \texttt{Spd\_RF2\_500\_603\_5107}	&	122	&	121	\\
\texttt{Spd\_RF2\_350\_378\_4411}	&	100	&	99	&	  \texttt{Spd\_RF2\_500\_603\_5115}	&	\textbf{116}	&	123	\\
\texttt{Spd\_RF2\_350\_378\_4419}	&	99	&	96	&	  \texttt{Spd\_RF2\_500\_603\_5123}	&	\textbf{116}	&	117	\\
\texttt{Spd\_RF2\_350\_378\_4427}	&	102	&	100	&	  \texttt{Spd\_RF2\_500\_637\_5131}	&	\textbf{106}	&	112	\\
\texttt{Spd\_RF2\_350\_378\_4435}	&	99	&	97	&	  \texttt{Spd\_RF2\_500\_637\_5139}	&	\textbf{105}	&	108	\\
\texttt{Spd\_RF2\_350\_378\_4443}	&	100	&	98	&	  \texttt{Spd\_RF2\_500\_637\_5147}	&	\textbf{102}	&	107	\\
\texttt{Spd\_RF2\_350\_406\_4451}	&	\textbf{89}	&	91	&	  \texttt{Spd\_RF2\_500\_637\_5155}	&	\textbf{101}	&	106	\\
\texttt{Spd\_RF2\_350\_406\_4459}	&	89	&	87	&	  \texttt{Spd\_RF2\_500\_637\_5163}	&	\textbf{97}	&	98	\\
\texttt{Spd\_RF2\_350\_406\_4467}	&	\textbf{88}	&	91	&	  \texttt{Spd\_RF2\_500\_672\_5171}	&	\textbf{93}	&	105	\\
\texttt{Spd\_RF2\_350\_406\_4475}	&	87	&	85	&	  \texttt{Spd\_RF2\_500\_672\_5179}	&	\textbf{90}	&	98	\\
\texttt{Spd\_RF2\_350\_406\_4483}	&	\textbf{87}	&	90	&	  \texttt{Spd\_RF2\_500\_672\_5187}	&	\textbf{89}	&	92	\\
\texttt{Spd\_RF2\_350\_435\_4491}	&	\textbf{75}	&	77	&	  \texttt{Spd\_RF2\_500\_672\_5195}	&	\textbf{91}	&	103	\\
\texttt{Spd\_RF2\_350\_435\_4499}	&	75	&	74	&	  \texttt{Spd\_RF2\_500\_672\_5203}   &	  \textbf{89}  &   97  \\
\texttt{Spd\_RF2\_350\_435\_4507}	&	\textbf{71}	&	76	&	      &        & \\	
\bottomrule
\end{longtable}
\end{scriptsize}


\subsection{Instances from Silva et al.~\cite{Siletal14}}
\label{sec:xp-silva}

The computational results of Silva et al.~\cite{Siletal14} include six 
classes of benchmark instances. 
We remark that only three of those sets, which we describe next, were available for our experiments.

\begin{description}
\item[Set III:] includes four instances adapted from the TSPLIB, ranging from 1,000 to 4,000 vertices and from 1,998 to 7,997 edges.
\item[Set V:] includes five instances adapted from the OR-Library, all with 1,000 vertices and 5,000 edges.
\item[Set VI:] includes twelve instances proposed by Leighton~\cite{Lei79}, all with 450 vertices and an edge count varying from 5,714 to 17,425.
\end{description}

Although these are relatively large instances, all of them have optimal solutions 
with no branch vertices that could be obtained with our enhanced algorithm.
Table \ref{tab:silvaInstances} indicates the execution 
time (in seconds) to solve those instances with the enhanced algorithm,
and also compare the heuristic solution values achieved with the best results presented by Silva et 
al.~\cite{Siletal14}. Note that the latter is a randomized algorithm, and the 
authors report the minimum, maximum and average solution values over 100 
executions. Nevertheless, we present in the table only their results with minimum number of branch vertices.

The path expanding and multi-path expanding heuristics were able to construct 
much superior solutions.
In fact, they are able to find provably optimal (\emph{i.e.} branch-free) 
solutions in many cases.
Note that, for all instances not optimally solved by~\cite{Siletal14}, both our solutions improve their results.

Finally, since all the instances in these benchmark sets have optimal solutions with no branch vertices, the obligatory branches lower bound has no effect. As for the cut edges decomposition, only one of the instances (\texttt{VI/le450\_15b}) has two bridges.

\begin{scriptsize}
\begin{longtable}[HT]{@{}llllcccc@{}}
\caption{Results regarding the available instances used by Silva et al.~\cite{Siletal14}. All these instances admit a branch-free solution, which could be obtained using the enhanced algorithm in the time indicated in the fourth column. \label{tab:silvaInstances}}\\
\toprule									
                        	 &	\multicolumn{3}{c}{Instance}	 &	Exact solution	 & Path         & Multi-Path    &	Best result	  \\
                        	 &	id & $n$ & $m$                   &	time (s)	      &	Expanding   & Expanding     &	by \cite{Siletal14}	  \\
\midrule
\addlinespace[11pt]
\multirow{4}{*}{Set III}	 &	alb1000	 &	1000	 &	1998	 &	30.13	 &	15	 &	16	 &	54	  \\
                        	 &	alb2000	 &	2000	 &	3996	 &	180.18	 &	26	 &	28	 &	121	  \\
                        	 &	alb3000a	 &	3000	 &	5999	 &	74.70 &	50	 &	43	 &	191	  \\
                        	 &	alb4000	 &	4000	 &	7997	 &	1778.87	 &	65	 &	58	 &	247	  \\
\addlinespace[11pt]															
\multirow{5}{*}{Set V}	 &	steind11	 &	1000	 &	5000	 &	0.32	 &	0	 &	0	 &	33	  \\
                    	 &	steind12	 &	1000	 &	5000	 &	30.13	 &	1	 &	1	 &	26	  \\
                    	 &	steind13	 &	1000	 &	5000	 &	0.34	 &	1	 &	0	 &	28	  \\
                    	 &	steind14	 &	1000	 &	5000	 &	41.09	 &	1	 &	1	 &	28	  \\
                    	 &	steind15	 &	1000	 &	5000	 &	57.71	 &	2	 &	1	 &	27	  \\
\addlinespace[11pt]															
\multirow{12}{*}{Set VI}	 &	le450\_5a	 &	450	 &	5714	 &	2.25	 &	0	 &	0	 &	1	  \\
                        	 &	le450\_5b	 &	450	 &	5734	 &	2.15	 &	0	 &	0	 &	1	  \\
                        	 &	le450\_5c	 &	450	 &	9803	 &	0.23	 &	0	 &	0	 &	0	  \\
                        	 &	le450\_5d	 &	450	 &	9757	 &	4.64	 &	0	 &	0	 &	0	  \\
                        	 &	le450\_15a	 &	450	 &	8186	 &	0.27	 &	0	 &	0	 &	4	  \\
                        	 &	le450\_15b	 &	450	 &	8169	 &	2.23	 &	1	 &	1	 &	3	  \\
                        	 &	le450\_15c	 &	450	 &	16680	 &	0.47	 &	0	 &	0	 &	0	  \\
                        	 &	le450\_15d	 &	450	 &	16750	 &	0.82	 &	0	 &	0	 &	0	  \\
                        	 &	le450\_25a	 &	450	 &	8160	 &	0.94	 &	0	 &	0	 &	8	  \\
                        	 &	le450\_25b	 &	450	 &	8263	 &	0.88	 &	0	 &	0	 &	4	  \\
                        	 &	le450\_25c	 &	450	 &	17343	 &	1.53	 &	0	 &	0	 &	0	  \\
                        	 &	le450\_25d	 &	450	 &	17425	 &	0.53	 &	0	 &	0	 &	0     \\	
\bottomrule									
\end{longtable}
\end{scriptsize}

\section{Final remarks}
\label{sec:finalremarks}

This paper introduces an effective decomposition method and two constructive heuristics for the minimum branch vertices problem.
Since most benchmark instances for the problem (535 out of 546) could be solved to optimality by a branch and cut algorithm, 
we stress the relevance of the algorithms introduced here as preprocessing methods: 
a phase between formulation and solution for improving the algorithmic solvability of the problem \cite{NemhauserWolsey1988}.
We highlight the computational efficiency of the algorithms we present, whose implementations run in less than a second for all available instances.

We compared a standard branch and cut algorithm with its application on the remaining subproblems after running our decomposition and heuristic methods. Not only does the \emph{enhanced version} provide a better duality gap in $96\%$ of the instances with no optimality certificate, as it makes the algorithm consistently faster in all cases.

The decomposition method is fast and effectively reduces problem instances:
we present average results ranging from 15.9\% to 24.8\% of removed vertices, and from 38.6\% to 61.3\% of removed edges.

Finally, the heuristics provided better MIP starts in most cases: it is better than the ones presented by Silva et al.~\cite{Siletal14} in all of the 21 available instances, among those used in their experiments. Our heuristics also provide better primal bounds for 103 out of 175 instances, for which Carrabs et al.~\cite{Caretal13} describe extended results.
We remark that our proposed heuristics could be used to rapidly provide very good quality solutions for more advanced local search procedures, such as the one recently proposed by Mar\'in~\cite{Mar15}.

\bibliographystyle{plain}

\end{document}